\documentclass{article}
\usepackage{arxiv}

\usepackage[utf8]{inputenc} 
\usepackage[T1]{fontenc}    
\usepackage{hyperref}       
\usepackage{url}            
\usepackage{booktabs}       
\usepackage{amsfonts}       
\usepackage{nicefrac}       
\usepackage{microtype}      
\usepackage{lipsum}		
\usepackage{graphicx}
\usepackage{natbib}
\usepackage{doi}
\usepackage{amsmath}

\usepackage{amssymb,amsfonts,amsthm,bm,bbm}
\usepackage{algorithm}
\usepackage{soul}
\usepackage{algorithmic}
\usepackage{array}
\usepackage{graphicx}
\usepackage{booktabs}
\usepackage{multirow}
\usepackage{pifont}
\usepackage{tcolorbox}
\usepackage{enumitem}
\usepackage{multicol}
\usepackage{tikz}
\usepackage{balance}




\renewcommand{\paragraph}[1]{\smallskip\noindent\textbf{#1.}}
\newcommand{\ssubsection}[1]{\subsubsection{\textbf{#1}}}

\newcommand{\proname}{\textsf{decentBRM}}


\newtheorem{definition}{Definition}

\newtheorem{remark}{Remark}
\newtheorem{Theorem}{Theorem}
\newtheorem{Claim}{Claim}
\newtheorem{Lemma}{Lemma}
\newtheorem{Assumption}{Assumption}
\setlist[itemize]{leftmargin=3mm}


\title{\textsf{Decent-BRM}: Decentralization through Block Reward Mechanisms}


\author{ \href{https://orcid.org/0000-0002-5662-0386}{\includegraphics[scale=0.06]{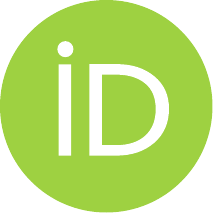}\hspace{1mm}Varul Srivastava} \\
	Machine Learning Lab\\
	IIIT Hyderabad\\
	\texttt{varul.srivastava@research.iiit.ac.in} \\
	\And
	\href{https://orcid.org/0000-0003-4634-7862}{\includegraphics[scale=0.06]{orcid.pdf}\hspace{1mm}Dr. Sujit Gujar} \\
	Machine Learning Lab\\
	IIIT Hyderabad\\
	\texttt{sujit.gujar@iiit.ac.in} \\
}



\hypersetup{
pdftitle={DecentBRM},
pdfsubject={q-bio.NC, q-bio.QM},
pdfauthor={David S.~Hippocampus, Elias D.~Striatum},
pdfkeywords={First keyword, Second keyword, More},
}

\begin{document}
\maketitle

\begin{abstract}
Proof-of-Work is a consensus algorithm where miners solve cryptographic puzzles to mine blocks and obtain a reward through some Block Reward Mechanism (BRM). PoW blockchain faces the problem of centralization due to the formation of mining pools, where miners mine blocks as a group and distribute rewards. The rationale is to reduce the risk (variance) in reward while obtaining the same expected block reward. In this work, we address the problem of centralization due to mining pools in PoW blockchain. We propose a two-player game between the new miner joining the system and the PoW blockchain system. 

We model the utility for the incoming miner as a combination of (i) expected block reward, (ii) risk, and (iii) cost of switching between different mining pools. With this utility structure, we analyze the equilibrium strategy of the incoming miner for different BRMs: (a) memoryless -- block reward is history independent (e.g., Bitcoin) (b) retentive: block reward is history-dependent (e.g., Fruitchains). For memoryless BRMs, we show that depending on the coefficient of switching cost $c$, the protocol is decentralized when $c = 0$ and centralized when $c > \underline{c}$. In addition, we show the impossibility of constructing a memoryless BRM where solo mining gives a higher payoff than forming/joining mining pools. While retentive BRM in Fruitchains reduces risk in solo mining, the equilibrium strategy for incoming miners is still to join mining pools, leading to centralization. We then propose our novel retentive BRM -- \textsf{Decent-BRM}. We show that under \textsf{Decent-BRM}, incoming miners obtain higher utility in solo mining than joining mining pools. Therefore, no mining pools are formed, and the Pow blockchain using \textsf{Decent-BRM} is decentralized. 

\end{abstract}


\section{Introduction}
{
\if 0
In Proof-of-Work (PoW) blockchains, the participants are called \emph{miners}. Each miner devotes some computing power to solving a cryptographic puzzle. If a miner solves this puzzle, a block is mined and the miner is rewarded. Each miner has a certain hash rate, which roughly translates to the total number of queries the miner can make to solve the puzzle in one unit time.  However, with the wide adoption of PoW blockchains and the recent rapid growth in the performance of mining hardware (from CPUs to GPUs and now ASICs) an ``arms race'' of computing power has caused the total Hash-rate for the cryptocurrency to rise to unprecedented levels. Although the expected reward for each miner remains the same, their variance in reward has increased drastically. 

\paragraph{Mining Pools} The probability of mining a block increases because of the increased mining power of the pool but the total reward obtained on each block for each miner part of the pool goes down. There exist multiple RSSs such as Pay-Per-Share (PPS), Pay-Per-Last-N-Shares (PPLNS), Proportional, Geometric etc. However, \citet{Roughgarden2021Variance} analysed and showed that PPS and PPLNS are variance-optimal RSS under certain conditions. Increase in the popularity of mining pools as a lower-risk alternative to solo mining, the hash rate of these pools has been increasing. Due to this, few mining pools have grown disproportionately large and might be a cause of concern for the underlying PoW blockchain protocol.

\paragraph{Protocol Design for Decentralized PoW} There have been several protocols~\cite{2,6,20,21,21-6,21-11,23,Rafael2017Fruitchains} proposed in an attempt to reduce centralization in PoW blockchains by disincentivizing pool formation. These work incentivize solo mining by (i) decreasing variance in block proposal~\cite{20,21}, or (ii) decreasing variance through different Block reward mechanisms~\cite{Rafael2017Fruitchains,2}. However, utility for miners under this reduced variance still incentivizes the formation of mining pools, and such PoW blockchains are still centralized~\cite{Stouka2023FruitchainsAnalysis}. 
\fi
}

With the increased popularity of peer-to-peer, \emph{decentralized} cryptocurrencies such as Bitcoin and Ethereum, more and more users are interested in possessing those coins/tokens. Users obtain cryptocurrency tokens by purchasing from those who mint or themself mint the tokens. The cryptocurrencies keep transaction records through a distributed ledger, and a distributed consensus protocol is required to maintain a consistent ledger across all users. \citet{nakamoto2009Bitcoin} proposed a blockchain technology via a \emph{Proof-of-Work} (PoW) consensus protocol for the same. In PoW, the interested parties can join the network and need to solve some cryptographic puzzle to \emph{mine} the next block, thereby appending a set of transactions (contained in the block) to the ledger. In return, the first node, \emph{miner}, to solve the puzzle is rewarded with newly minted coins according to some Block Reward Mechanism (BRM). 

\paragraph{Mining Pools} Due to the increasing popularity of cryptocurrencies, there has been 10x growth in computing power used for mining Bitcoins in the last five years~\cite{hashrateIncrease}, as evident from Figure~\ref{fig:hashrate-increase}. The mining of a block is a random event, and \emph{solo miners} with limited computing power face risk due to the uncertainty about mining any blocks for a prolonged duration. To minimize the risk, miners come together and form \emph{mining pools} and distribute rewards when the pool mines a block. The formation of mining pools leads to the same expected rewards but much more frequent payment, reducing the risk. Due to this, few mining pools have grown disproportionately large. E.g., the top three mining pools in Bitcoin control a majority of the computing power in Bitcoin: Foundry USA ($\approx 29\%$), Ant Pool ($\approx 16\%$) and F2 Pool ($14\%$) control a $59\%$ of the total mining power (Figure~\ref{fig:centralization-bitcoin}). The top $3$ pools in ZCash~\cite{zCash} are Nanopool ($\approx 25\%$), Mining Pool Hub ($\approx 18\%$), and Suprnova ($\approx 8\%$), control the majority ($>50\%$) of the mining power. The security of the PoW-based blockchains relies on the fact that no authority controls the majority of the computing power. Such unprecedented levels of centralization could pose a severe security threat to the PoW blockchains. 

\begin{figure}[!th]
    \centering
    \includegraphics[width=0.8\linewidth]{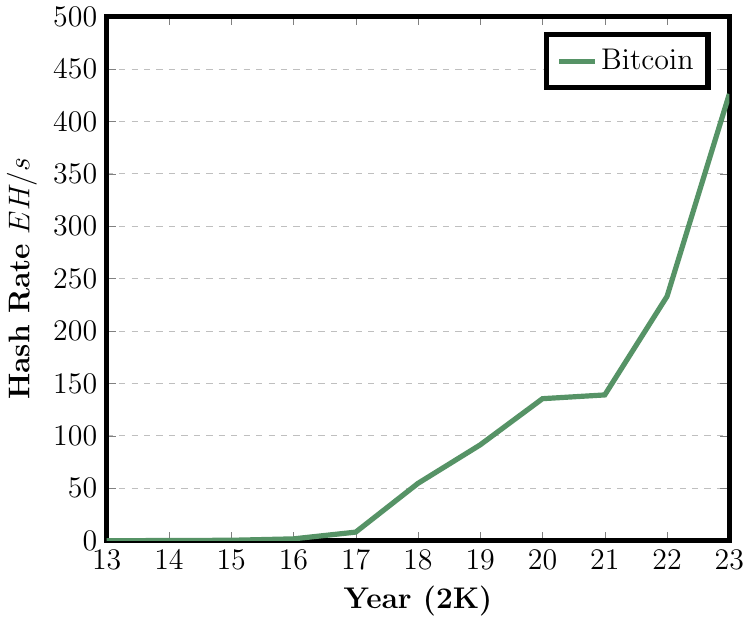}
    \caption{Bitcoin Hashrate over 10 years~\cite{empiricalHashrateIncrease}}
    \label{fig:hashrate-increase}
\end{figure}

\paragraph{Threat Due to Centralization} The security of PoW protocols relies on honest majority assumption, i.e., $>50\%$ of the mining power is controlled by honest players. However, if the majority of the mining power is concentrated with a small number of mining pools, then the protocol is under severe threat and faces the risk of censorship attacks, denial of service attacks, and security compromise, among other attacks. It would be easy to double-spend coins if the top mining pools, controlling $>50\%$ computing power, come together~\cite{Badertscher2021Bitcon51}. The challenge of $>50\%$ majority with a single mining pool is not new. For example, GHash.IO~\cite{gHashIO} controlled the majority ($55\%$) of the Bitcoin network in June 2014. Due to this, the public trust in the currency dropped. GHash.IO has since committed to limiting its mining power to $<40\%$~\cite{gHashIO}.
Thus, it is crucial to address the problem of centralization in the PoW blockchain caused by the formation of mining pools. 
Miners are rational and interested in maximizing their rewards while minimizing risks. Hence, studying the event of a new miner joining mining pools as a game is natural. 
The researchers have looked at how mining pools through game theory, focusing on optimal RSS~\cite{Roughgarden2021Variance,8}, or maximizing miner utility ~\cite{chatzigiannis2022oxfordJournal,4}. However, a limited analysis of the centralization of PoW through mining pools necessitates its investigation.

\subsection*{Our Goal} 
This work aims to study a miner joining the PoW system as a game and analyze conditions under which the system tends to be centralized or decentralized. In addition, we aim to construct a mechanism under which the formation of mining pools is not profitable over solo mining, and the system remains decentralized. 

\paragraph{Our Approach} To analyze the behavior of a miner joining the PoW blockchain system with different block-reward mechanisms (BRMs) represented by $\Gamma$. Towards this, we propose a two-player game $\mathcal{G}(\Gamma)$ for BRM $\Gamma$ played between player $p_{1}$ deciding to join the system, and the PoW blockchain system represented as player $p_{2}$. We define the utility of $p_{1}$, which depends on (i) expected reward, (ii) risk, and (iii) function switching costs. (i) Expected rewards are a function of computing power. (ii) Since variance is an indicator of risk~\cite{chiu2016RiskNobel}, miners intend to reduce this risk. We model it via $\rho^{th}$ moment of the reward, which is a more general representation of risk (in literature, it is usually represented as the $2^{nd}$ moment, i.e. variance~\cite{chiu2016RiskNobel,Roughgarden2021Variance}). (iii) modeling switching costs are needed to account for the cost in hash queries, network latency, etc., in switching from one mining pool to another during the mining process (Section~\ref{ssec:utility}). 
 We also define the conditions of (i) Fairness (motivated by conditions of unbiased Reward Sharing Scheme of~\cite{Roughgarden2021Variance}) and (ii) $\rho$-Decentralization. 
 To analyze $\mathcal{G}(\Gamma)$, we categorize BRMs into two types, \emph{memoryless} and \emph{retentive}. Suggestive by the nomenclature, memoryless BRMs (eg. Bitcoin~\cite{nakamoto2009Bitcoin}) are such that the reward given for any block is independent of the ledger's history and retentive BRMs employ use of ledger's history while distributing reward (eg. Fruitchain~\cite{Rafael2017Fruitchains}). 

\begin{figure}[!th]
    \centering
    \includegraphics[width=0.8\linewidth]{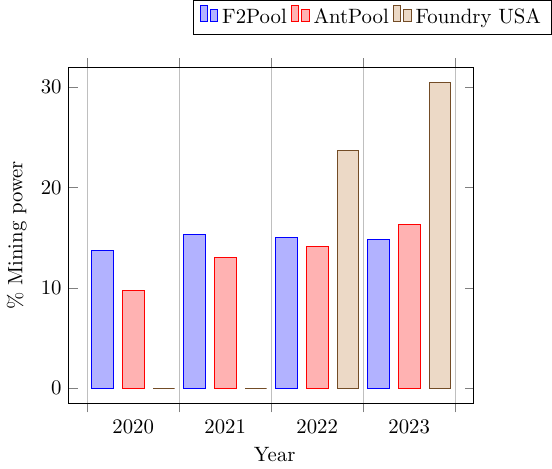}
    \caption{Centralization in Bitcoin over 4 years~\cite{bitcoinStatistics}}
    \label{fig:centralization-bitcoin}
\end{figure}


\subsection*{Our Contributions}
First, we study memoryless BRMs and model switching costs through a parameter $c$ and function $D$. If $c=0$ (i.e., it does not assign any switching cost ), player $p_1$ distributes the mining power across mining pools, implying decentralization is possible. However, we show an impossibility result that no memoryless BRM can assure $\rho$-decentralization if switching costs exceed a certain threshold. Hence, we turn to retentive BRMs (eg. Fruitchains~\cite{Rafael2017Fruitchains}), where the reward given for any block depends on historical entries in the ledger. Our analysis shows that Fruitchain reduces risk and increases utility for solo mining. However, joining mining pools is still a better strategy than solo mining (because mining pools also experience reduced risk). Despite the above negative results, we show that retentive BRMs increase the utility of solo mining. Taking motivation from this, we propose our novel BRM \proname. In \proname, if a block is mined, the reward is distributed among all the miners who have mined blocks before the current block. This protocol is such that following the solo mining strategy is an equilibrium strategy (Lemma~\ref{lemma:proto-decentralized}). Therefore, the PoW blockchain protocol using $\Gamma_{\proname}$ is $\rho$-Decentralized. 

In summary, the following are our contributions:
\begin{itemize}
    \item We define a two-player game to study these BRMs where player $p_{1}$ is the miner joining the system and $p_{2}$ is the rest of the system, comprising of mining pools and different solo miners. (Section~\ref{ssec:players})
    \item In Section~\ref{ssec:game-properties}, we introduce the notion of  $\rho$-Decentralization. 
    \item  We prove two results for memoryless BRMs (i) for $c = 0$ protocol is $\rho$-Decentralized (Lemma~\ref{thm:2-decentralization}) and (ii) for $c \geq \underline{c}$, centralization is bound to happen; where $\underline{c}$ is a threshold based on the other parameters (Theorem~\ref{thm:bakwaas}).
    \item We also show through Theorem~\ref{thm:impossibility} that it is impossible to construct a memoryless BRM where solo mining is incentivized over joining mining pools.
    \item We propose a novel retentive BRM, namely \proname, that incentevizes solo mining over joining mining pools; thus, it is $\rho$-Decentralized. This helps in addressing the centralization of computing power in PoW blockchains.
\end{itemize}

\section{Related Work}

Game Theory is extensively employed for Blockchains security analysis~\cite{blockchainPoA,DavideAAMAS22,kiayiasNash,AJAINAAMAS}, constructing Transaction Fees mechanism~\cite{roughgardenTFM1,elaineShiTFM,bitcoinF}, Consensus~\cite{BlockchainGT}, low-variance mining~\cite{Roughgarden2021Variance,8} et cetera. 

\paragraph{Centralization in PoW \& Mining pools} Centralization in PoW blockchains poses a major threat to protocol security~\cite{3,14,15}. \citet{5} argues that centralization in PoW blockchains is due to high variance in reward for mining a block. As shown by~\citet{4}, the creation of mining pools reduces this variance in the reward, which miners will be willing to join even at (some limited) expense of their expected reward. Mining pools offer different reward schemes to the miners, such as the Pay-Per-Share (PPS), Proportional, Geometric, and Pay-Per-Last-N-Shares (PPLNS). \citet{Roughgarden2021Variance} formally studied \emph{reward sharing schemes} (RSS) opted by different mining pool strategies. While all RSS have the same expected payoff, they offer different risks (measured as variance in reward). Roughgarden and Shikhelman show that Pay-Per-Share (PPS) and Pay-Per-Last-N-Shares (PPLNS) are variance-optimal RSS under different conditions. \citet{8} also study reward schemes offered by mining pools from a Game Theoretic lens and find mining pool strategies that maximize optimal steady state social welfare for miners. However, these works~\cite{Roughgarden2021Variance,8} analyze reward-sharing schemes while our work involves analysis of Block reward mechanisms (BRMs) and the possibility of decentralization through them in the PoW system by disincentivizing the creation of mining pools. The role of BRMs for the security of blockchain~\cite{15} and security and decentralization in PoS blockchain~\cite{1} have already been studied. We study the role of BRMs in decentralizing PoW blockchains, which indirectly ensures security~\cite{Badertscher2021Bitcon51,14} of the protocol. 

\paragraph{Game Theoretic analysis of Miners} Game theoretic analysis of miner behaviour under PoW blockchains has also been performed previously through Mean Field Games~\cite{16}, Evolutionary Game Theory~\cite{19} studying equilibrium under different conditions of computational power and network propagation delay. However, these works lack a general model capturing miner payoffs, risk and additional costs (such as switching costs), which are captured by our work. In addition, these works analyze the miner strategy to choose among different reward-sharing schemes under Bitcoin. In contrast, our work analyses the miner choice among different reward-sharing schemes under different block reward mechanisms, including (but not limited to) Bitcoin. Similarly, \citet{18} studies miner's dynamic choice among different mining pools. However, their utility structure is restricted to expected reward and does not account for risk (eg. variance) among other costs. Multiple other works analyze miner choice in joining pools based on reward sharing schemes, network delay, and cost of mining, such as~\cite{17,Roughgarden2021Variance,8,4,1}. Note that this is not an exhaustive list. \citet{chatzigiannis2022oxfordJournal} propose portfolio strategies for miners with different risk aversion levels to join across different mining pools and even different currencies. However, their analysis is completely empirical based on their computational tool. The goal of their experimental analysis is to find portfolio strategies which maximize risk-adjusted returns for miners, and in no way discuss and resolve centralization in PoW blockchains due to pool formation.

\paragraph{Protocols for Decentralized Mining}
There also have been efforts to introduce mechanisms that reduce miner variance such as modification to Bitcoin~\cite{20}, Bobtail~\cite{23}, Fruitchains~\cite{Rafael2017Fruitchains}, Hybrid PoW protocols~\cite{21}, SmartPool~\cite{21-11}, Proof-of-Mining~\cite{6}, Sign-to-Mine~\cite{21-6}, Multi-nonce schemes~\cite{24}, HaPPY-Mine~\cite{2} among others. However, they all reduce the incentive to join mining pools or lead to small-sized mining pools. In any case, pool formation is still incentivized, and the PoW system is centralized. \citet{Stouka2023FruitchainsAnalysis} show that Fruitchains~\cite{Rafael2017Fruitchains} reduce variance for solo mining and mining pools. Hence, it is centralized. We make similar observations in our analysis of fruitchains.

\section{Preliminaries}
In this section, we explain (i) blockchain preliminaries, (ii) block reward mechanisms, and (iii) the functioning of mining pools.

\subsection{Blockchain Preliminaries}
\ssubsection{Proof-of-Work (PoW)} A \emph{blockchain} is a ledger maintained and updated by a set of \emph{miners} which we also refer to as \emph{players}. The blockchain comprises an ordered chain of blocks, where a block at height $k$ is $B_{k}$ and is preceded by a ``parent-block'' $B_{k-1}$ for all $k>0$\footnote{$k=0$ is genesis block}. A block $B_{k}$ consists of block-header and transaction data. The header contains information (1) hash (cryptographic digest) of the parent block $B_{k-1}$ (Null genesis block), (2) height of the block $k$, (3) Merkle-root (cryptographic fingerprint) of the set of transactions included in the block, and (4) a random number $n$, called \emph{Nonce}. Note that the block structure may vary among different PoW blockchains, but these four elements are invariably present in the block header and are sufficient for our understanding. The protocol decides on some \emph{Target} $T$ and considers a block \emph{mined} if its header's hash is lesser than the target $T$.
Thus, each miner searches for a nonce till the block is mined by investing in computing power to compute hashes. The \textbf{hash rate} of a miner is how many queries it can make in a time interval. Any \textbf{round} $k$ is defined as the duration from after block $B_{k-1}$ is mined till when block $B_{k}$ is mined. The entire protocol is parameterized by the security parameter $\lambda$ which is also the size of the Hash function.

\ssubsection{Block Reward Mechanism} At round $k$, a PoW blockchain protocol rewards a miner for its work if it successfully mines the block $B_k$. Typically, the rewards are of two types: (i) \emph{block reward} and (ii) \emph{transaction fees}. We abstract out the \emph{block-reward mechanism} (BRM) of the PoW blockchain for a block $B_{k}$ mined in round $k$ based on the history $\mathcal{H}_{k}$ at round $k$ as $\Gamma(B_{k},\mathcal{H}_{k})$. At round $k$, for most of the PoW blockchains, BRMs offer block rewards to the miner independent of the chain's evolution till round $k$. We refer to such BRM as \emph{memoryless BRM}. We formally define the memorylessness condition below.

\begin{definition}[Memoryless BRM]\label{def:memoryless}
    A BRM $\Gamma(\cdot,\cdot)$ has \emph{memoryless} property if for any round $k$ and any history $\mathcal{H}^{1}_{k},\mathcal{H}^{2}_{k}$ we have 
    \begin{equation}\label{eqn:memoryless}
        \Gamma(B_{k},\mathcal{H}^{1}_{k}) = \Gamma(B_{k},\mathcal{H}^{2}_{k})
    \end{equation}
\end{definition}

A trivial BRM which is memoryless is $\Gamma(B_{k},\mathcal{H}_{k}) = 0$ for all $B_{k},\mathcal{H}_{k}$. Another example of memoryless BRM is bitcoin's BRM ($\Gamma_{memoryless}$). $\Gamma_{memoryless}$ rewards only the miner who has mined the current block $B_{k}$ and is therefore invariant of the previous blocks mined (history). Reward in a memoryless BRM might depend on the block's height. (In Bitcoin, the block reward halves after every 210K blocks are mined.) More sophisticated BRMs can leverage the miners' historical activity and pay more to the miners working harder. Examples of such retentive (or non-memoryless) BRMs are Fruitchains~\cite{Rafael2017Fruitchains} and \proname\ (protocol proposed in Section~\ref{sec:protocol}).

In memoryless BRMs, miners face much uncertainty in mining the block due to the increased competition. Instead of solo mining, often multiple miners come together for mining and share the rewards amongst themselves; though expected rewards remain the same, the risk (variance in the rewards) decreases. For example, suppose a solo miner starts mining; there is a 10\% chance of mining a block with a reward of 10 coins. If two friends with the same computing power decide to mine jointly and split the rewards equally, the expected reward is still one coin each. However, now, the chance of `0' reward is 81\% as opposed to 90\% in solo mining. Thus, PoW mining leads to \emph{mining pools}.

\subsection{Mining Pools} In PoW blockchains, although the expected reward for each miner is sufficiently adequate, the frequency at which each miner gets rewarded (through mining a block) is very low. Multiple miners work together to mine a block. All these miners together are called a \emph{mining pool}. As the pool has a higher mining power than individual miners, a higher chance of mining block in any round. The pool is managed by a pool manager and is constituted of multiple miners. We assume the PoW blockchain under analysis has $\mathcal{P}=\{1,2,\ldots,p\}$ mining pools. Following is the working of mining pool $i$.
\begin{itemize}
    \item Pool manager constructs the skeleton of the block the pool will attempt to mine. If the block is mined, the reward is transferred to the pool (with public key $pk_{i}$). 
    \item Each miner tries different nonces to solve the difficulty puzzle. They send the results of some (or all) queries they made, which serve as input to the \emph{reward-sharing scheme} (RSS) opted by the pool for when a block is mined. 
\end{itemize}
We now describe the message transfer within the mining pool.

\ssubsection{Message Distribution} Each mining pool comprises different miners. Miners send messages to the mining pool $i$, which usually comprises of the miner identity (public key) $s$ of the miner and uniformly samples binary string $n \in \{0,1\}^{\lambda}$ (nonce for the block header) such that the hash of the block is $< T_{i}$ for some target decided by the pool. This target is easier than the PoW blockchain target (i.e. $T_{i} > T$). Each message belongs to the message set $\mathcal{M}$. We represent the set of messages obtained in a round $k$ by pool $i$ as $\mathcal{M}_{k}^{i}$. We represent the message as the tuple $m:= (s,n)$. 

\smallskip \noindent RSS have been well studied in the literature~\cite{Roughgarden2021Variance,fisch2017WineMiningPools}. Our work abstracts out the expected reward for analysis and is RSS agnostic\footnote{under certain conditions due to Claim~\ref{claim:fair-rss}}. 

\ssubsection{Reward Sharing Scheme~\cite{Roughgarden2021Variance}} Reward Sharing Scheme (RSS) is a function that takes in the set of messages received by a mining pool, along with the identity of a miner $s$ which is part of the pool, and outputs what fraction of the reward should be given to the miner. The set of messages received by a mining pool $i$ in round $k$ is $\mathcal{M}_{i}^{k}$. Then, for any miner $s$ that is part of the pool, the RSS for pool $i$ is a function $\psi_{i}(s,\mathcal{M}_{i}^{k}) \in [0,1]$. The reward given to the player $s$ is therefore, $\Gamma^s_k:=\Gamma(B_{k},\mathcal{H})\psi_{i}(s,\mathcal{M}_{i}^{k})$. If we are analyzing reward for a single player, then $\psi:\{s,s'\}\times\mathcal{M} \rightarrow [0,1]$ where $s$ indicates the fraction of reward given to the miner, and $s^{'}$ represents the fraction given to rest of the miners. Details on Message distribution and reward sharing scheme can be found in~\cite{Roughgarden2021Variance}.

As the introduction explains, we want to study centralization in mining pools. Few mining pools control most of the mining power, posing security concerns for the PoW blockchains. We assume all the miners are strategic players interested in maximizing their rewards and minimize the risk. Thus, we model problem of a player who want to join mining as a game in the next section.

\section{The Game}\label{sec:game}
We consider the blockchain system with $p$ mining pools as $\textsf{M} = \left(\Gamma,\psi\right)$ which consists of $\Gamma$ -- block reward mechanism and $\psi_{i}$ -- reward sharing scheme opted by the mining pool $i$.
As there is a large number of miners already in the system, we model the dilemma of a new player about how to distribute its computing power as a two-player game $\mathcal{G}(\Gamma)$ for BRM $\Gamma$. Player $p_{1}$  is the miner joining the system and $p_{2}$ abstracts the remaining system (comprising of mining pools and solo miners) into a single player. 

In this section, (i) we explain different classes of player 1 -- discussing risk tolerance of $p_{1}$, (ii) the strategy space of the players, (iii) the utility structure of $p_{1}$ that is joining the PoW system, and (iv) the desired properties for PoW system -- fairness and decentralization.


\subsection{Players}\label{ssec:players}
The Game consists of players $\mathcal{P} = \{p_{1},p_{2}\}$. Player $p_{1}$ is characterized by $\theta_{1} \in \mathbb{R}_{>0}\times\mathbb{N}$ represented as $\theta_{1} = (M_{1},\rho)$ where $M_{1}$ is the total hash rate of player $p_{1}$ and $\rho$ is its risk-tolerance. The player $p_{2}$ has type $\theta_{2} \in \mathbb{R}_{> 0}\times\mathcal{A}$ where $\mathcal{A}$ is the set of possible reward sharing schemes followed by different mining pools. 
%
We make the following assumption wherever necessary: owing to the enormous mining power of the PoW blockchain system compared to any single miner~\cite{bitcoinStatistics}. 

\begin{Assumption}\label{assumption:approximate-mining-power}
        For player $p_{1}$ characterized by $\theta_{1} = (M_{1},\rho)$ and $p_{2}$ by $\theta_{2} = (M_{2}, A)$, the mining power of a single player is minimal compared to the rest of the system i.e. $M_{1} << M_{2}$.
\end{Assumption}

\ssubsection{Risk Tolerance} 
Let $R_{k}$ be the random variable representing player $p_{1} $'s reward in round $k$. The primary reason for $p_1$ to join one or more mining pools is minimizing the risk $R_k$. Solo mining 
has a very high variance (the second-order moment of $R_k$) in the reward obtained.  According to~\cite{chiu2016RiskNobel}, \texttt{
``Pioneered by Nobel laureate Harry Markowitz in the 1950s, the mean-variance (MV) formulation is a fundamental theory for risk management in finance.''}. 
MV is the difference between the mean reward and its standard deviation -- the risk. Instead of just modelling risk through second moment of the reward, we parameterize it by a tolerance of player $p_{1}$ using $\rho \in \mathbb{N}$. 
A $\rho$-risk tolerant player will want to minimize $\left(\mathbb{E}[R_{k}^{\rho}]\right)^{1/\rho}$. A very high $\rho$ is characteristic of risk-averse whereas $\rho=1$ is a risk-neutral. We do not consider risk-loving players as they would prefer solo mining.

\ssubsection{Strategy Space and Environment}
Each player follows a strategy sampled from the set $\mathcal{S} = \Delta(p+1)$ which is a $(p+1)-$simplex; $\overline{g}$ for player $p_1$ and $\overline{f}$ for player $p_2$. $\overline{g}=(g_{0},g_{1},\ldots,g_{p})$ such that $\sum_{i=0}^{p} g_{i} = 1$. $g_i$ indicates the fraction of mining power used to mine with mining pool $i$, and $g_0$ indicates the fraction used for solo mining. Similarly, $\overline{f}$ is the strategy for player $p_2$. 
Therefore, the hash rate for pool $i$ is given by $M_{2}\cdot f_{i}$ at the time $p_1$ is joining.
Thus, the total of $M_{2}f_{0} + M_{1}g_{0}$ computing power -- solo miners use hash rate, and for each mining pool $i$, the hash rate is $M_{2}f_{i} + M_{1}g_{i}$. As $M_2 >> M_1$, we assume that the joining of a single player $p_1$ does not affect $\overline{f}$. 

\paragraph{Discrete Strategy Space} The strategy space described above captures that a player can reallocate some of its mining power across different mining pools. However, practically, there will be some limit on the minimum computing power it can assign to a particular pool. Hence, we discretize the strategy space based on this minimum possible change. We define a strategy space with minimum permissible change being $\alpha$ as $\alpha$-\emph{Discrete Strategy Space}, formally defined in Definition~\ref{def:marginal-change} below. 
\begin{definition}[$\alpha$-discrete Strategy Space]\label{def:marginal-change}
    A strategy space $S_{\alpha}$ is $\alpha$-discrete iff $\forall g^{1},g^{2} \in S_{\alpha},\forall i \in \{0,1,2,\ldots,k\}\;\exists l \in \mathbb{Z}_{\geq 0}$
    \[
        |g^{1}_{i} - g^{2}_{i}| = l\cdot \alpha  
    \]
\end{definition}

\subsection{Utility of a Player}\label{ssec:utility}
Having defined players and the strategy space, we now define the utility for player $p_{1}$. Towards this, consider the random variable $R_{k}$ representing the reward obtained by the player in round $k$. If $\Gamma(B_{k},\mathcal{H}_{k})$ is the (PoW protocol specific) BRM in a round $k$ and $\psi_{i}:\{s_{1},s_{-1}\}\times\mathcal{M}\rightarrow[0,1]$ is the Reward Sharing Scheme (RSS) opted by pool $i$. Then, the reward in round $k$ is given by random variable $R_{k}$ (randomness is due to distribution $\mathcal{M}(\overline{g},\overline{f})$) as: 
\begin{equation}\label{eqn:reward}
    R_{k} = \begin{cases} 
      \Gamma(B_{k},\mathcal{H}_{k}) & \text{with prob. }\frac{g_{0}M_{1}}{M_{1} + M_{2}} \\
      \Gamma(B_{k},\mathcal{H}_{k})\psi_{i}(s_{1},\mathcal{M}_{i,r}(\overline{g},\overline{f})) & \text{with prob. }\frac{g_{i}M_{1} + f_{i}M_{2}}{M_{1} + M_{2}} \\
      0 & \text{with prob. }\frac{f_{0}M_{2}}{M_{1} + M_{2}} \\
   \end{cases}
\end{equation}

\ssubsection{Message Distribution} Miners who are part of some mining pool send messages to the pool during mining attempts. If the message was sent by player $p_{1}$, then public key $s = s_1$ and $s = s_{-1}$ otherwise. These messages are random variables $m \in \{s_{1},s_{-1}\}\times\{0,1\}^{\lambda}$ induced by the hash rate distribution $\overline{g},\overline{f}$ and is therefore represented as $\mathcal{M}(\overline{g},\overline{f})$. The distribution is such that for any randomly sampled message $m \in \mathcal{M}^{i}_{k}(\overline{g},\overline{f})$ in any arbitrary round $k$, we have:
\begin{equation*}    
m = \begin{cases} 
      (s_{1},n) & \text{with prob. }\frac{g_{i}M_{1}}{M_{1}g_{i} + M_{2}f_{i}} \\
      (s_{-1},n) & \text{with prob. }\frac{f_{i}M_{2}}{M_{1}g_{i} + M_{2}f_{i}} \\
      \end{cases}
\end{equation*}
Here, $n$ is the solution to the puzzle (nonce), and $\lambda$ is the size of the hash function output, also a security parameter of the protocol.

\if 0
\paragraph{Expected Reward} Since $p_{1}$ attempts to maximize the expected reward, $\mathbb{E}[R_{k}]$ we represent this by including $a\mathbb{E}[R_{k}]$ to the utility function for some constant $a \in \mathbb{R}_{\geq 0}$. 

\paragraph{Risk Tolerance} In addition to maximizing expected reward, $p_{1}$ attempts to minimize risk in its reward. For a player characterized by $\theta_{1} = (M_{1},\rho)$ minimizing risk means minimizing the value of $\mathbb{E}[R_{k}^{\rho}]$. This is reflected by including the term $-b\mathbb{E}[R_{k}^{\rho}]$ in the utility of player $p_{1}$ for some constant $b \in \mathbb{R}_{\geq 0}$. However, since maximizing the expected reward over minimizing risk is favoured, typically $a > b$.
\fi 

\paragraph{Pool Hopping / Switching Cost} Player $p_1$ typically may hop between different pools (called pool-hopping~\cite{Cortesi2022PoolHopping}), to minimize risks. However, changing the pool incurs some costs, e.g., a drop in the effective hash rate. We model such pool hopping costs as \emph{switching cost function }$D:\Delta(K+1)\rightarrow\mathbb{R}_{\geq 0}$. $D$ has the following properties: 
\begin{itemize}
    \item[\textbf{P1}.] $D(\overline{g})$ monotonically increases with $\left|\{ g_{i} | g_{i} \neq 0; i \in \{0,1,2,\ldots,k\}\}\right|$.
    \item[\textbf{P2}.] For $\overline{g}^{1},\overline{g}^{2} \in S_{\alpha}$, if conditions (i) $g^{1}_{q} = g^{2}_{q} \forall q \in[1,k]/\{i,j\}$, (ii) $g^{1}_{i} + g^{1}_{j} = g^{2}_{i} + g^{2}_{j}$, and (iii) $g^{1}_{i}\cdot g^{1}_{j} > g^{2}_{i}\cdot g^{2}_{j}$ holds, then $D(\overline{g}^{1}) \leq D(\overline{g}^{2})$
\end{itemize}

The property \textbf{P1} states if a player hops across more pools, the switching cost increases as it needs to switch more frequently. The property \textbf{P2} corresponds to higher switching cost (due to more frequent switching between mining pools) if mining power is distributed more equally between two (or more) pools. In addition to this, if the strategy space is $\alpha$-discrete, then the switching cost changes by at least some minimum amount $D_{min}$ between two strategies. We call this Marginal Switching Cost and is formally defined in Definition~\ref{def:marginal-switching-cost}. 
\begin{definition}[Marginal Switching Cost]\label{def:marginal-switching-cost}
    For switching cost function $D(\cdot)$ and strategies $g^{1},g^{2} \in S_{\alpha}$ (when $D(\overline{g}^{1}) \neq D(\overline{g}^{2})$) where $S_{\alpha}$ is an $\alpha$-discrete Strategy space, the \emph{Marginal cost} of switching $D_{min}>0$ is 
    \[
        D_{min} := \min_{g^{1},g^{2} \in S; g^{1} \neq g^{2}} \left|D(g^{1}) - D(g^{2})\right|
    \]
\end{definition}

In summary, player $p_{1}$'s utility constitutes reward (positively) and risk and switching costs (negatively). We weigh them with $a,b,c$, respectively. Thus, in a round $k$ its utility is: 
\begin{equation}\label{eqn:utility-single-round}
    u_{1}\left(\overline{g},\overline{f},r;(\theta_{1},\theta_{2})\right) = a\cdot\mathbb{E}[R_{k}] - b\cdot\left({\mathbb{E}[R_{k}^{\rho}]}\right)^{1/\rho} - c\cdot D(\overline{g})
\end{equation}
Considering a discount factor $\delta \in [0,1]$ the utility of $p_{1}$ across rounds (starting from round $r_{0}$) is given by 
\begin{equation}\label{eqn:utility-across-rounds}
    U_1\left(\overline{g},\overline{f};(\theta_{1},\theta_{2})\right) = \sum_{r=r_{0}}^{\infty} \delta^{r-r_{0}} u_{1}\left(\overline{g},\overline{f},r;(M_{1},\rho,M_{2})\right)
\end{equation}

We show the equilibrium strategy for player $p_1$ as well as fairness and decentralization defined in the next subsection are agnostic to player $p_2$'s utility $U_2$ (see Remark~\ref{rem:utility-agnostic}). Thus, we do not need to define it and hence, we skip it.  

\subsection{Game Progression and Properties}\label{ssec:game-properties}
\label{ssec:properties}
The goal of this game is to model the event of player $p_{1}$ joining the system and following utility maximizing strategy $\overline{g}^{*} \in S_{\alpha}$. 

\ssubsection{Game} Our game progresses in two steps. First, player $p_{2}$ plays, disclosing its opted strategy $\overline{f}$ following which, player $p_{1}$ plays strategy $\overline{g}$ such that $U_1\left(\overline{g},\overline{f};(\theta_{1},\theta_{2})\right)$ is maximized. Note that $p_{1}$ observes $\overline{f}$ before playing $\overline{g}$. Thus,  $\overline{g}$ can be a depends of $\overline{f}$. For notational ease, we slightly exploit notations and represent $\overline{g}_{\overline{f}}$ as just $\overline{g}$ unless necessary. Therefore optimal strategy $\overline{g}^*$ for a given strategy $\overline{f}$ is solution to the optimization problem for $p_{1}$ (after observing $\overline{f}$) given by: 
\[
    \begin{aligned}
    \max_{\overline{g}}\hspace{10pt} & U_{1}\left(\overline{g},\overline{f};((M_{1},\rho),(M_{2},A))\right) \\ 
    \text{s.t.}\hspace{10pt} & g_{i} \geq 0 \forall i \in \{0,1,\ldots,p\}\\
    & \sum_{i \in \{0,1,\ldots,p\}} g_{i} = 1\\
    \end{aligned}
\]

\begin{definition}[Equilibrium Strategy]\label{def:eq}
    For a BRM $\Gamma$, and game $\mathcal{G}(\Gamma)$, the strategy $\overline{g}^{*}$ is Equilibrium strategy for $p_{1}$ if $\forall \theta_{1} \in \mathbb{R}_{> 0}\times\mathbb{N}, \forall\theta_{2} \in \mathbb{R}\times\mathcal{A}, \forall \overline{g}^{'}\in S_{\alpha}$ and $\forall \overline{f}$ we have 
    \[
    U_{1}\left(\overline{g}^{*},\overline{f};(\theta_{1},\theta_{2}) \right) \geq U_{1}\left(\overline{g}^{'},\overline{f}; (\theta_{1},\theta_{2})\right) 
    \]
    We call strategy $\overline{g}^{*}$ as Dominant Strategy for the given $\overline{f}$.
\end{definition}

Based on the outcome of the game, we consider the following properties which are required in our analysis.

\ssubsection{Fairness} The reward sharing scheme for a mining pool $i$ is fair for player $p_{1}$ if its reward sharing scheme serves as a hash rate estimator. This notion of fairness is motivated by the definition of unbiased RSS in~\cite{Roughgarden2021Variance}. For completeness, we define fairness wrt. our setting in Definition~\ref{def:fairness}. 
\begin{definition}[Fairness~\cite{Roughgarden2021Variance}]\label{def:fairness}
    If mining pool $i$ (controlling $f_{i}$ fraction of total mining power $M_{2}$) follows Reward Sharing Scheme $\psi_{i}:\{s_{1},s_{-1}\}\times\mathcal{M}^{*}\rightarrow[0,1]$, then given mining pool is fair for player $p_{1}$ following strategy $\overline{g}$ if 
    \begin{equation}\label{eqn:fairness-condition}
        \mathbb{E}[\psi_{i}(1,\mathcal{M}^{i}_{k})] = \frac{g_{i}M_{1}}{g_{i}M_{1} + f_{i}M_{2}}
    \end{equation}
\end{definition}

We show through the Claim~\ref{claim:fair-rss} below that a mining pool should have a fair RSS to incentivize other miners to join the pool. 

\begin{Claim}
\label{claim:fair-rss}
    If a mining pool $i$ follows a RSS $\psi_{i}$ that is not fair, then for any BRM $\:\Gamma$ and corresponding game $\mathcal{G}(\Gamma)$, any strategy $\overline{g}^{2}$ with $g^{2}_{0} = x$ and $g^{2}_{i} = y$ is dominated by $\overline{g}^{1}$ with $g^{1}_{0} = x+y$ and $g^{1}_{i} = 0$ for $p_{1}$.
\end{Claim}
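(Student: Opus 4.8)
The plan is to compare $\overline{g}^{1}$ and $\overline{g}^{2}$ term by term through the single‑round utility in \eqref{eqn:utility-single-round} and then sum over rounds via \eqref{eqn:utility-across-rounds}; since the two strategies agree on every coordinate except the pair $(g_{0},g_{i})$, each comparison reduces to what happens to the solo component and the pool‑$i$ component. I read ``not fair'' as the pool assigning $p_{1}$ an expected share strictly below its proportional value, i.e. $\mathbb{E}[\psi_{i}(s_{1},\mathcal{M}_{i}^{k})] < \frac{g_{i}M_{1}}{g_{i}M_{1}+f_{i}M_{2}}$ — the only direction consistent with the surrounding statement that unfair pools fail to attract miners. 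First I would fix $\overline{f}$ and an arbitrary type $\theta_{1}=(M_{1},\rho)$ and reduce the infinite sum to a per‑round comparison, which is legitimate because the discount factors and the terms indexed by the unchanged coordinates cancel.

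The core step is the expected‑reward term $a\,\mathbb{E}[R_{k}]$. Using \eqref{eqn:reward} I would write $\mathbb{E}[R_{k}]$ as the solo contribution $\mathbb{E}[\Gamma]\,\frac{g_{0}M_{1}}{M_{1}+M_{2}}$ plus the pool‑$i$ contribution $\mathbb{E}[\Gamma\,\psi_{i}(s_{1},\mathcal{M}_{i}^{k})]\,\frac{g_{i}M_{1}+f_{i}M_{2}}{M_{1}+M_{2}}$. The key identity is that a \emph{fair} RSS (Definition~\ref{def:fairness}) makes the pool‑$i$ contribution of power $g_{i}M_{1}$ exactly equal to the solo contribution of the same power: substituting the fairness equality collapses the factor $\frac{g_{i}M_{1}+f_{i}M_{2}}{M_{1}+M_{2}}$ and leaves $\mathbb{E}[\Gamma]\,\frac{g_{i}M_{1}}{M_{1}+M_{2}}$. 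Hence when the RSS underpays $p_{1}$ the pool‑$i$ contribution is \emph{strictly smaller} than the solo contribution of power $yM_{1}$, so moving the mass $y$ out of pool $i$ and into solo (passing from $\overline{g}^{2}$ to $\overline{g}^{1}$) strictly increases $\mathbb{E}[R_{k}]$. The boundary case $g_{i}=0$ is consistent because a miner who submits no messages receives expected share $0$.

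For the switching‑cost term $-c\,D(\overline{g})$, note that $\overline{g}^{1}$ sets $g_{i}=0$ while keeping $g_{0}=x+y$, so the set of nonzero coordinates of $\overline{g}^{1}$ is contained in that of $\overline{g}^{2}$, and strictly smaller whenever $y>0$. Property \textbf{P1} then yields $D(\overline{g}^{1})\le D(\overline{g}^{2})$, so this term also weakly favors $\overline{g}^{1}$. Property \textbf{P2} is not needed here, since we are collapsing one coordinate to zero rather than rebalancing two nonzero coordinates.

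The main obstacle is the risk term $-b\,(\mathbb{E}[R_{k}^{\rho}])^{1/\rho}$, and this is where the claim is genuinely delicate. Solo mining concentrates $p_{1}$'s reward into rare, large payouts, so for $\rho\ge 2$ the $\rho$‑th moment of $\overline{g}^{1}$ is actually \emph{larger} than that of $\overline{g}^{2}$: joining even an underpaying pool smooths the reward and lowers risk. Concretely, the pool‑$i$ contribution to $\mathbb{E}[R_{k}^{\rho}]$ carries a factor $\mathbb{E}[\psi_{i}^{\rho}]\le \mathbb{E}[\psi_{i}] < \frac{yM_{1}}{yM_{1}+f_{i}M_{2}}$, which is precisely the inequality that makes solo mining the higher‑variance option. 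Thus the three terms do not all point the same way, so a proof by term‑by‑term monotonicity cannot go through for arbitrary weights $(a,b,c)$ and arbitrary $\rho$. I would therefore close the argument either (i) in the risk‑neutral regime $\rho=1$, where the risk term collapses into the mean and, assuming $a>b$, the strict expected‑reward gain together with the switching‑cost saving immediately give $U_{1}(\overline{g}^{1})>U_{1}(\overline{g}^{2})$; or (ii) by quantifying the degree of unfairness and showing the expected‑reward deficit of the pool (weighted by $a$) dominates the risk reduction (weighted by $b$) plus the switching saving (weighted by $c$). Given the paper's stated intent to ``abstract out the expected reward'' when invoking this claim, the expected‑reward comparison of the second paragraph is the substantive content, and I expect the intended reading to restrict attention to the regime in which it drives the dominance.
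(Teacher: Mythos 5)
Your expected-reward computation is exactly the paper's central step, and your diagnosis of the risk term is in fact sharper than the paper's own treatment: the paper resolves the tension you identify simply by asserting, inside the proof, that the weight $a$ on expected reward is such that $b$ and $c$ are negligible with respect to it, which is precisely your option (ii)/regime reading. Your switching-cost observation via property \textbf{P1} is a correct (if unneeded, under that assumption) refinement the paper does not even spell out.

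The genuine gap is your first paragraph: you \emph{assume away} the overpaying direction of unfairness, reading ``not fair'' as $\mathbb{E}[\psi_{i}(s_{1},\mathcal{M}_{i}^{k})] < \frac{g_{i}M_{1}}{g_{i}M_{1}+f_{i}M_{2}}$ on the grounds that it is ``the only direction consistent with the surrounding statement.'' But the claim as written covers any $\psi_{i}$ violating Definition~\ref{def:fairness}, and in the overpaying case $\mathbb{E}[\psi_{i}] > \frac{g_{i}M_{1}}{g_{i}M_{1}+f_{i}M_{2}}$ your own expected-reward identity runs in reverse: moving mass $y$ from pool $i$ to solo mining strictly \emph{decreases} $\mathbb{E}[R_{k}]$, so $\overline{g}^{1}$ would not dominate $\overline{g}^{2}$ and the claim would be false unless that case is ruled out. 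The paper's proof has a second case doing exactly this: it argues overpayment is infeasible for the pool, because if $\mathbb{E}[\psi_{i}]$ exceeds the proportional share for the incoming miner (and, by the Case-1 logic, is at least proportional for every existing member), then summing the shares over all pool members exceeds $1$, i.e., the pool pays out more than it earns and goes bankrupt. Interpreting the hypothesis to exclude this case is a reinterpretation of the statement, not a proof of it; you need the feasibility/bankruptcy argument (or some substitute) to close the claim as stated.
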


\ssubsection{Decentralization} The mechanism $\left(\Gamma,\psi\right)$ is comprised of block reward scheme $\Gamma$ and reward sharing scheme $\psi$. We define decentralization of a mechanism as $\rho-$Decentralized when miners joining the system have risk tolerance $\rho$. We call a Mechanism $\rho-$\emph{Decentralized} if the relative mining power of the highest mining pool does not change upon a strategic player joining the system. 

\begin{definition}[$\rho-$Decentralized] \label{def:weak-decentralization}
    Consider a mechanism $\left(\Gamma,\psi\right)$ is $\rho-$Decentralized for game $\mathcal{G}(\Gamma)$ played between player $p_{1}$ and PoW blockchain system $p_{2}$ where:
    \begin{itemize}
        \item $p_{1}$ characterized by $\theta_{1} = (M_{1},\rho)$ follows equilibrium strategy $\overline{g}^*$ on observing $\overline{f}$.
        \item $p_{2}$ characterized by $\theta_{2} = (M_{2},A)$ follows any arbitrary strategy $\overline{f}$ 
    \end{itemize}
    That is, $\overline{g}^{*}$ satisfies
    \begin{equation}
        \max_{i \in \{1,2,\ldots,k\}} f_{i} \geq \max_{j \in \{1,2,\ldots,k\}} \frac{g^{*}_{j}M_{1} + f_{j}M_{2}}{M_{1} + M_{2}}
    \end{equation}
\end{definition}


\begin{remark}\label{rem:decreasing-decentralization}
    If a mechanism is $\rho-$Decentralized then for every player joining the system, the value of centralization decreases (or remains the same). Therefore, the mechanism ensures that the system decreases (or at least preserves) centralization in the PoW system.
\end{remark}

The remark follows from the definition of Decentralization (Definition~\ref{def:weak-decentralization}). Consider miners $m_{1},m_{2},m_{3}...$ joining the system leads to $G_{j} = \max_{i \in \{1,2,\ldots,k\}} f_{i,j}$ where $f_{i,j}$ is the mining power of the pool $i$ after miner $m_{j}$ joins the system. Therefore, if a system is $\rho-$Decentralized, then after miners $m_{1},m_{2},m_{3},\ldots$ with risk tolerance $\rho$ join the system, by Definition~\ref{def:weak-decentralization}, we have $G_{1} \geq G_{2} \geq G_{3} \geq \ldots$. Therefore, centralization always decreases or remains the same.

\begin{remark}\label{rem:utility-agnostic}
    Fairness and Decentralization guarantees provided by mechanism $\left(\Gamma,\psi\right)$ are agnostic to the utility function $U_{2}(\cdot)$ of $p_{2}$.
\end{remark}

Remark~\ref{rem:utility-agnostic} follows from the observation that fairness property does not require any constraints on utility and decentralization property should be satisfied for all strategies followed by $p_{2}$. 

In summary, the game used for analyzing centralization due to mining pools is defined as $\mathcal{G} := \langle\{p_{1},p_{2}\},\{\theta_{1},\theta_{2}\},S_{\alpha},\Gamma,\mathcal{M},(U_{1},U_{2})\rangle$. In the following section, we use this definition of $\mathcal{G}$ to analyse conditions that lead to centralization or decentralization of a PoW blockchain system with some BRM $\Gamma$. 

\section{Theoretical Analysis}\label{sec:analysis}

This section provides analysis of (i) memoryless BRMs like Bitcoin~\cite{nakamoto2009Bitcoin} ($\Gamma_{memoryless}$), (ii) retentive (non-memoryless) BRMs such as Fruitchains~\cite{Rafael2017Fruitchains} ($\Gamma_{fruitchains}$). In this analysis, we find conditions under which these mechanisms lead to centralization and decentralization of the system. We use bitcoin reward function for memoryless BRMs because they abstract out the 

\subsection{Memoryless BRMs ($\Gamma_{memoryless}$)}\label{ssec:memoryless}

Bitcoin is a well-known PoW-based blockchain protocol, which uses memoryless BRM (Definition~\ref{def:memoryless}). Consider indicator function $\mathbb{I}$ for block $B_{k}$ and player with public-key $s$ as $\mathbb{I}(s,B_{k}) = 1 $ if the block is mined by player $s$ and $0$ otherwise. $\Gamma_{memoryless}$ in round $k$ for player (or mining pool) with public key $s$ is
\[
    \Gamma_{memoryless}^{s_{1}}(B_{k},\mathcal{H}_{k}) = \mathbb{I}(s_{1},B_{k})R_{block} 
\]
For notational ease, we write the BRM $\Gamma_{memoryless}^{s_{1}}(B_{k},\mathcal{H}_{k})$ as $\Gamma_{memoryless,k}^{s_{1}}$. If the block is mined by the player $p_{1}$ or one of the $p$ mining pools, $p_{1}$ gets rewarded proportionately as given by Equation~\ref{eqn:reward}. The game using Bitcoin-like memoryless BRMs $\Gamma_{memoryless}$ is $\mathcal{G}(\Gamma_{memoryless}) := \langle\{p_{1},p_{2}\},\{\theta_{1},\theta_{2}\},S_{\alpha},\Gamma_{memoryless},\mathcal{M},(U_{1},U_{2})\rangle$. Consider player $p_{1}$ has public-key $s_{1}$ and mining pool $i$ has public key $pk_{i}$. From Equation~\ref{eqn:reward}, the reward for $p_{1}$ in $\mathcal{G}(\Gamma_{memoryless})$ (after Assumption~\ref{assumption:approximate-mining-power}) is
\[
    R^{memoryless}_{k} = \begin{cases}
      \Gamma_{memoryless,k}^{s_{1}} & \text{with prob. }\frac{g_{0}M_{1}}{M_{2}} \\
      \Gamma_{memoryless,k}^{pk_{i}}\psi_{i}(s_{1},\mathcal{M}_{i,r}(\overline{g},\overline{f})) & \text{with prob. }f_{i} \\
      0 & \text{with prob. }f_{0} \\
    \end{cases}
\]

Due to Claim~\ref{claim:fair-rss}, mining pools follow a Fair Reward Sharing Scheme. Therefore, for pool $i$
\[
    \psi_{i}\left(s_{1},\mathcal{M}_{i,k}(\overline{g},\overline{f})\right) = \frac{g_{i}M_{1}}{g_{i}M_{1} + f_{i}M_{2}} \approx \frac{g_{i}M_{1}}{f_{i}M_{2}}
\]
Therefore, the utility for $p_{1}$ from Equation~\ref{eqn:utility-single-round} is given as
\[
\begin{aligned}
    U_{1}\left(\overline{g},\overline{f};((M_{1},\rho),(M_{2},A))\right) \approx & a\sum_{i=1}^{p} \frac{g_{i}M_{1}}{M_{2}}R_{block} \\ & - b\cdot\left(R^{\rho}_{block}\cdot\sum_{i=1}^{k} \frac{g_{i}^{\rho}}{f_{i}^{\rho-1}}\right)^{1/\rho}\\ 
    & - c\cdot D(\overline{g})
\end{aligned}
\]
We first show through Lemma~\ref{thm:2-decentralization} that a PoW system with incoming player $p_{1}$ characterized by $\theta_{1} = (M_{1},\rho=2)$ (i.e. variance is considered as the risk) is $\rho-$Decentralized for memoryless BRMs.
\begin{Lemma}\label{thm:2-decentralization}
    A PoW blockchain with memoryless BRM $\Gamma_{memoryless}$ is $\rho-$Decentralized under $\mathcal{G}(\Gamma_{memoryless})$ for any $\rho > 1$ if $c=0$.
\end{Lemma}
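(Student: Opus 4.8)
The plan is to use $c = 0$ to delete the switching-cost term $c \cdot D(\overline{g})$ from the per-round utility in Equation~\ref{eqn:utility-single-round}, so that player $p_1$'s objective collapses to $a \cdot \mathbb{E}[R_k] - b \cdot (\mathbb{E}[R_k^\rho])^{1/\rho}$. My first step is to observe that, after applying the fair-RSS form from Claim~\ref{claim:fair-rss} and Assumption~\ref{assumption:approximate-mining-power}, the expected reward $\mathbb{E}[R_k]$ is constant in $\overline{g}$: summing the per-pool contributions $f_i \cdot \frac{g_i M_1}{f_i M_2} R_{block} = \frac{g_i M_1}{M_2} R_{block}$ over the simplex gives $\frac{M_1 R_{block}}{M_2} \sum_i g_i = \frac{M_1 R_{block}}{M_2}$, which does not depend on how $p_1$ allocates its power. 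Hence maximizing $U_1$ is equivalent to minimizing the risk term, i.e.\ minimizing the $\rho$-th moment $\mathbb{E}[R_k^\rho]$.

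Second, I would solve the resulting minimization as a convex program. Stripping the positive constant $R_{block}^\rho$ (and the fixed $M_1, M_2$ scaling), the quantity to minimize has the form $\Phi(\overline{g}) = \sum_i \frac{g_i^\rho}{f_i^{\rho - 1}}$; since $\rho > 1$ each summand is convex in $g_i$, so $\Phi$ is convex and the feasible region $\{\overline{g} : g_i \geq 0,\ \sum_i g_i = 1\}$ is a simplex. A Lagrangian/KKT analysis of $\Phi - \lambda(\sum_i g_i - 1)$ yields the stationarity condition $\rho \, g_i^{\rho-1}/f_i^{\rho-1} = \lambda$ for every pool carrying positive power, forcing $g_i / f_i$ to equal a common constant and hence the equilibrium allocation $g_i^* \propto f_i$. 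In words, the risk-minimizing player mirrors the system's existing distribution of mining power.

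Third, I would substitute $g_i^* \propto f_i$ into the $\rho$-Decentralization inequality of Definition~\ref{def:weak-decentralization}. Because $p_1$ spreads its power in the same proportions as $p_2$, each post-join relative share $\frac{g_i^* M_1 + f_i M_2}{M_1 + M_2}$ collapses back to $f_i$, so the largest pool's relative share is unchanged and $\max_i f_i \geq \max_j \frac{g_j^* M_1 + f_j M_2}{M_1 + M_2}$ holds, which is exactly $\rho$-Decentralization. I expect the main obstacle to lie in the optimization step rather than the final substitution: I must argue that the proportional allocation is a genuine global minimizer over the \emph{discrete} strategy space $S_\alpha$ (not merely a stationary point of the continuous relaxation), and, most delicately, that routing power into solo mining ($g_0 > 0$) never strictly lowers the moment. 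It is here that Assumption~\ref{assumption:approximate-mining-power} ($M_1 \ll M_2$) does the real work, letting me simplify the moment expression and discharge the boundary cases so that the clean proportional solution is exactly optimal.
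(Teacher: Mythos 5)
Your proposal is correct and follows the same overall decomposition as the paper's proof in Appendix~\ref{app:thm-2-decentralization}: with $c=0$ the switching cost vanishes, the expected reward $\frac{M_{1}R_{block}}{M_{2}}$ is constant over the simplex (via Claim~\ref{claim:fair-rss} and Assumption~\ref{assumption:approximate-mining-power}), so the problem reduces to minimizing $\sum_{i} g_{i}^{\rho}/f_{i}^{\rho-1}$, whose minimizer is the mirror allocation $\overline{g}=\overline{f}$, and substituting this into Definition~\ref{def:weak-decentralization} gives the result with equality. Where you genuinely diverge is in how global optimality of the mirror allocation is certified. The paper avoids KKT machinery entirely: it partitions the coordinates of a deviating strategy $\overline{g}'$ into sets $X,Y,Z$ according to whether $g_{i}$ exceeds, falls below, or equals $f_{i}$, writes the deviations as $\pm\tau_{i}$, and applies the elementary tangent-line bound $(x-y)^{\rho}\geq x^{\rho}-\rho x^{\rho-1}y$ together with the mass-balance identity $\sum_{l\in X}\tau_{l}=\sum_{m\in Y}\tau_{m}$ to show directly that any deviation weakly increases the $\rho$-th moment. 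This gives a global comparison in one step, with no need to argue that stationarity implies optimality. Your route via convexity of each summand plus Lagrangian stationarity is equally valid (convexity for $\rho>1$ makes the KKT point a global minimizer of the continuous relaxation), and it has the advantage of making the proportionality structure $g_{i}^{*}\propto f_{i}$ transparent rather than pulled out of an inequality; but it obliges you to separately handle the boundary coordinate $g_{0}$, since the solo-mining term contributes to the moment at order $M_{1}/M_{2}$ per unit mass versus order $(M_{1}/M_{2})^{\rho}$ for the pool terms, so it sits outside the stationarity condition. You correctly flag this and correctly identify that Assumption~\ref{assumption:approximate-mining-power} discharges it. You are also more candid than the paper on two points it silently elides: the exact mirror strategy $\overline{f}$ need not lie in the discrete space $S_{\alpha}$, and if $f_{0}>0$ the true optimum ($g_{0}=0$ with pool mass rescaled by $1/(1-f_{0})$) makes the post-join shares exceed $f_{j}$ by an $O(M_{1}/M_{2})$ amount; both discrepancies are of the same negligible order the paper routinely approximates away, so neither constitutes a gap relative to the paper's own standard of rigor.
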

\begin{proof}
    The proof follows by showing that given a strategy $\overline{f}$ followed by $p_{2}$ (which is observable by $p_{1}$ in $\mathcal{G}(\Gamma_{memoryless}$), the player $p_{1}$ following $\overline{g} = \overline{f}$ gives higher utility (because of lower value of risk + switching-cost) than any other strategy $\overline{g}^{'}$. The complete proof is provided in Appendix~\ref{app:thm-2-decentralization}.
\end{proof}

We also discuss a scenario where joining the largest mining pool is the best response for a player $p_{1}$ joining the mining pool. If any new player joins the largest pool, then the PoW blockchain system will become centralized as more new miners join the system. We show in Theorem~\ref{thm:bakwaas} below that joining the largest mining pools is the equilibrium strategy for any player $p_{1}$ joining the mining pool under certain conditions. The proof follows because this strategy's utility is higher than any other strategy given $p_{2}$ follows some $\overline{f}$. The complete proof is provided in Appendix~\ref{app:thm-bakwaas}.

\begin{Theorem}\label{thm:bakwaas}
    A PoW blockchain system with memoryless BRM $\Gamma_{memoryless}$ and game $\mathcal{G}(\Gamma_{memoryless})$ for player $p_{1}$ characterized by $\theta_{1} \in \mathbb{R}_{> 0}\times\mathbb{N}$ has equilibrium strategy $\overline{g}^{*}$ where $g_{i}^{*} = 1$ for $i = arg\max_{i \in \{1,2,\ldots,k\}} f_{i}$ if $c \geq \underline{c}= \frac{b\cdot R_{block}\cdot M_{1}\cdot p}{M_{2}\cdot D_{min}}$.
\end{Theorem}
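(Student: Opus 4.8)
The plan is to fix an arbitrary $\overline{f}$ (together with the types $\theta_1,\theta_2$), set $j=\arg\max_{i}f_i$, and show that the candidate $\overline{g}^{*}$ with $g^{*}_{j}=1$ dominates every competitor $\overline{g}'\in S_\alpha$ in the sense of Definition~\ref{def:eq}. First I would use Claim~\ref{claim:fair-rss} (so every pool runs a fair RSS) and Assumption~\ref{assumption:approximate-mining-power} to reduce $U_1$ to its three additive pieces: an expected-reward term proportional to $(1-g_0)$, a risk term $b\,R_{block}\tfrac{M_1}{M_2}\big(\sum_i g_i^{\rho}/f_i^{\rho-1}\big)^{1/\rho}$, and a switching cost $c\,D(\overline{g})$. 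Since $\overline{g}^{*}$ has $g_0=0$, it already maximizes the reward term; hence any $\overline{g}'$ with $g_0>0$ is strictly worse on reward, additionally carries the solo-mining variance, and (because P1 counts the coordinate $g_0$ in the support) cannot have smaller switching cost. Such strategies are dominated immediately, which reduces the problem to comparing $\overline{g}^{*}$ against strategies supported on the pools only.

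Next I would isolate the switching cost. By property \textbf{P1}, $D$ is monotone in the size of the support $\{i:g_i\neq 0\}$, so a single-support strategy such as $\overline{g}^{*}$ attains the minimum of $D$ on $S_\alpha$; thus $D(\overline{g}')\geq D(\overline{g}^{*})$ for all $\overline{g}'$, and by Definition~\ref{def:marginal-switching-cost} any $\overline{g}'$ of strictly larger support obeys $D(\overline{g}')-D(\overline{g}^{*})\geq D_{min}$. I would then dispose of the remaining single-support competitors: a strategy placing all weight on some pool $i\neq j$ ties on switching cost but has risk $R_{block}\tfrac{M_1}{M_2}f_i^{-(\rho-1)/\rho} > R_{block}\tfrac{M_1}{M_2}f_j^{-(\rho-1)/\rho}$ since $f_i<f_j$, so $\overline{g}^{*}$ wins on risk alone, for any $c\geq 0$.

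The only genuine contest is therefore against a multi-support pool strategy $\overline{g}'$, which can lower the risk term (spreading the $\rho$-th moment across pools) at the cost of a switching penalty at least $c\,D_{min}$ larger. Here the whole theorem collapses to the single decisive inequality
\[
c\,\big(D(\overline{g}')-D(\overline{g}^{*})\big)\;\geq\; b\big(\mathrm{risk}(\overline{g}^{*})-\mathrm{risk}(\overline{g}')\big),
\]
which, using $D(\overline{g}')-D(\overline{g}^{*})\geq D_{min}$ and $\mathrm{risk}(\overline{g}')\geq 0$, is implied by $c\,D_{min}\geq b\,\mathrm{risk}(\overline{g}^{*})$. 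I would then bound the worst-case risk gap by $b\,\mathrm{risk}(\overline{g}^{*})=b\,R_{block}\tfrac{M_1}{M_2}f_j^{-(\rho-1)/\rho}\leq b\,R_{block}\tfrac{M_1}{M_2}\,p$, invoking $f_j=\max_i f_i\geq (1-f_0)/p$ and $f_j^{-(\rho-1)/\rho}\leq f_j^{-1}$ (valid since $f_j\le 1$), up to the negligible solo fraction $f_0$. Substituting yields exactly $\underline{c}=\tfrac{b\,R_{block}\,M_1\,p}{M_2\,D_{min}}$, so $c\geq\underline{c}$ forces the decisive inequality and makes $\overline{g}^{*}$ dominant.

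I expect the last bounding step to be the main obstacle: controlling how much risk a distributed allocation can shave off relative to the single-largest-pool allocation and packaging it into the clean $p$-factor threshold. The subtlety is that the $\rho$-th-root risk functional is not separable across pools, so the loose-but-valid estimate (via $f_j\geq(1-f_0)/p$, or alternatively via $(\sum a_i)^{1/\rho}\le\sum a_i^{1/\rho}$) must be chosen so that the final threshold carries no hidden dependence on $\rho$ or on the individual $f_i$. Everything else—the reward comparison and the switching-cost ordering—follows directly from P1, the marginal-cost definition, and the fairness reduction of Claim~\ref{claim:fair-rss}.
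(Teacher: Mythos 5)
Your proposal is correct and follows essentially the same route as the paper's proof in Appendix~C: expected reward is allocation-independent under fair RSS, the risk of the all-in-largest-pool strategy is bounded by $b\,R_{block}\tfrac{M_1}{M_2}\,p$ using $f_j=\max_i f_i\geq 1/p$ (the paper bounds $f_1^{-(\rho-1)/\rho}\leq p^{(\rho-1)/\rho}\leq p$, you bound $f_j^{-(\rho-1)/\rho}\leq f_j^{-1}\leq p$, and the paper additionally lower-bounds the competitor's risk by the optimum value $b\,R_{block}M_1/M_2$ from Lemma~\ref{thm:2-decentralization} where you use the cruder but sufficient bound $\mathrm{risk}\geq 0$), and the switching-cost gap is bounded below via $c\,D_{min}$, yielding the identical threshold $\underline{c}$. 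If anything, your explicit case split is slightly more careful than the paper: the paper applies $D(\overline{g}^{2})-D(\overline{g}^{1})\geq D_{min}$ uniformly to all competitors, which does not hold for a competitor with the same support size (all weight on a smaller pool, where \textbf{P1} gives no strict gap), whereas you correctly dispose of that case on the risk term alone.
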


For complete decentralization of the system, our goal is to ensure that mining pools are not formed. This can happen if, for any player $p_{1}$ joining the system, solo mining is incentivized over joining any subset of mining pools. For memoryless BRMs, we show the impossibility of constructing such BRMs that incentivize solo mining over pool mining for any value of $\rho > 1$ through Theorem~\ref{thm:impossibility}.

\begin{Theorem}\label{thm:impossibility}
    It is impossible to construct a memoryless BRM $\Gamma$ such that equilibrium strategy for $p_{1}$ characterized by $\theta_{1} = (M_{1},\rho)$ (for $\rho > 1$) in game $\mathcal{G}(\Gamma)$ is solo mining i.e. $\overline{g}^{*}$ such that $g^{*}_{0} = 1$.
\end{Theorem}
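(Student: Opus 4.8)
The plan is to argue by contradiction: assume some memoryless BRM $\Gamma$ makes solo mining $\overline{g}^{*}=(1,0,\ldots,0)$ the equilibrium strategy for a player of type $(M_{1},\rho)$ with $\rho>1$, and then exhibit a single profitable deviation. Since an equilibrium strategy must dominate for \emph{every} $\overline{f}$ played by $p_{2}$ (Definition~\ref{def:eq}), I am free to fix a convenient environment: pick any $\overline{f}$ containing one pool $i$ with $f_{i}>0$ running a fair (e.g.\ PPS) RSS, and compare solo mining against the pure deviation $\overline{g}'$ that places all of $p_{1}$'s power into that single pool ($g'_{i}=1$, all other coordinates zero). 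Both $\overline{g}^{*}$ and $\overline{g}'$ are vertices of the simplex and hence lie in $S_{\alpha}$, so the deviation is admissible.

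The crux is to show that of the three ingredients of the utility in Eq.~\eqref{eqn:utility-single-round}, two are unchanged and the third strictly favors the pool. First, by the memoryless property (Definition~\ref{def:memoryless}) the block reward for the current block is $\Gamma(B_{k},\mathcal{H}_{k})=R_{block}$ independent of history, hence identical whether $p_{1}$ mines the block solo or the pool mines it; combined with fairness (Definition~\ref{def:fairness}) a direct computation gives the same expected reward in both cases, $\mathbb{E}[R_{k}]=R_{block}\,\frac{M_{1}}{M_{1}+M_{2}}$ exactly, so the $a$-weighted term is unchanged. Second, the switching cost is identical: both $\overline{g}^{*}$ and $\overline{g}'$ have exactly one nonzero coordinate, so by property \textbf{P1} they attain the same (minimal) value of $D$, and the $c$-weighted term cancels for every value of $c$ — this is precisely what makes the impossibility robust to the switching cost that drove centralization in Theorem~\ref{thm:bakwaas}.

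It then remains to compare the risk term, where $\rho>1$ is essential. For solo mining $\mathbb{E}[R_{k}^{\rho}]=\frac{M_{1}}{M_{1}+M_{2}}R_{block}^{\rho}$, whereas for the pool deviation (deterministic fair share $\frac{M_{1}}{M_{1}+f_{i}M_{2}}$ when the pool mines) $\mathbb{E}[R_{k}^{\rho}]=\frac{M_{1}^{\rho}}{(M_{1}+M_{2})(M_{1}+f_{i}M_{2})^{\rho-1}}R_{block}^{\rho}$, so their ratio is $\bigl(\frac{M_{1}}{M_{1}+f_{i}M_{2}}\bigr)^{\rho-1}<1$ because $f_{i}>0$ and $\rho>1$. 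Thus the $\rho$-th moment, and therefore the risk penalty $b(\mathbb{E}[R_{k}^{\rho}])^{1/\rho}$, is strictly smaller for the pool. With equal expected reward, equal switching cost, and strictly smaller risk, we get $U_{1}(\overline{g}')>U_{1}(\overline{g}^{*})$ whenever $b>0$, contradicting the assumption that solo mining is an equilibrium and proving the claim.

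The main obstacle I anticipate is the bookkeeping around the switching-cost term: I must justify $D(\overline{g}^{*})=D(\overline{g}')$ rather than merely $D(\overline{g}^{*})\le D(\overline{g}')$, since otherwise a large $c$ could in principle re-tilt the comparison toward solo mining. Leaning on \textbf{P1} — both strategies activate a single coordinate, so both sit at the minimum of $D$ — resolves this. A secondary subtlety is whether the fair share may be randomized: if it is, Jensen's inequality only \emph{increases} the pool's $\rho$-th moment, so to keep the deviation safely profitable I would take it against a variance-optimal (PPS-type) pool with deterministic share, which is exactly the regime that Claim~\ref{claim:fair-rss} and the cited variance-optimality results place us in. The remaining steps are the two exact moment computations, which are routine.
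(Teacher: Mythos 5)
Your proposal is correct and follows essentially the same route as the paper's proof (Appendix~\ref{app:thm-impossibility}): compare solo mining against devoting all power to a single fair pool, note that expected reward and switching cost coincide (both strategies have single-coordinate support, resolving what is evidently a typo where the paper writes $D(\overline{g}^{sm}) \neq D(\overline{g}^{lm})$ but uses equality), and show the pooled $\rho$-th moment is strictly smaller for $\rho>1$. Your exact moment ratio $\left(\frac{M_{1}}{M_{1}+f_{i}M_{2}}\right)^{\rho-1}<1$ even avoids the paper's reliance on the Assumption~\ref{assumption:approximate-mining-power} approximations, so nothing further is needed.
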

\begin{proof}
    The proof follows by showing existence of another strategy $\overline{g}^{'}$ which obtains higher utility for any given $\overline{f}$ than the solo mining strategy. The complete proof is provided in Appendix~\ref{app:thm-impossibility}.
\end{proof}

\begin{figure}
    \centering
    \begin{tikzpicture}
        \draw[red,very thick] (-2,0) -- (2,0) node[anchor=south east] {$c \geq \underline{c} = \frac{b\cdot R_{block}\cdot M_{1}\cdot p}{M_{2}\cdot D_{min}}$};
        \draw[red,very thick] (-2,0) -- (2,0) node[anchor=north east] {$\gets$ centralized};
        \draw[black] (2,0) -- (2,0.5) ;
        \draw[blue,very thick] (2,0) -- (4,0);   \draw[blue,very thick] (2,0) -- (3,0) node[anchor=south east] {open};
        \draw[blue,very thick] (3,0) -- (4,0); 
        \filldraw [brown] (4,0) circle (2 pt) node[anchor=north] {decentralized};
        \draw[black] (4,0) -- (4,0.5) node[anchor=south] {\color{brown}{$c = 0$}};
    \end{tikzpicture}
    \caption{Bounds on switching cost parameter for $\Gamma_{memoryless}$}
    \label{fig:bounds}
\end{figure}
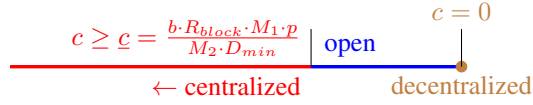

\paragraph{Need for Retentive BRMs} Lemma~\ref{thm:2-decentralization} and Theorem~\ref{thm:bakwaas} help us find bound on $c$ for which system is centralized, or $\rho$-Decentralized. We exhibit these bounds in Figure~\ref{fig:bounds}. We also make an interesting observation that $\underline{c}$ decreases as $M_{2}$ increases. Therefore, as the mining power of PoW blockchain system increases (which is already happening~\cite{hashrateIncrease}), the system's new miners are more and more prone towards joining the largest mining pool. This fact, in addition to Theorem~\ref{thm:impossibility} makes us realize the need for a retentive BRM where the equilibrium strategy for miners is solo mining rather than joining or forming a mining pool. 

\subsection{Retentive BRMs ($\Gamma_{fruitchain}$)}\label{ssec:retentive}

In this section, we (i) provide construction of retentive BRM used in Fruitchain Blockchain~\cite{Rafael2017Fruitchains}, (ii) show that fruitchain reduces risk for solo miners (iii) show that fruitchains also fail to disincentivize the formation of mining pools. 

\ssubsection{Fruitchains~\cite{Rafael2017Fruitchains}} It is a PoW-based blockchain that uses the concept of partial blocks to reduce risk in mining. Partial blocks have lesser target difficulty and are therefore easier to mine. A partial block refers to the most recent full block (blocks with higher target difficulty) $B_{k}$ as their parent. When a partial block (referencing parent $B_{k}$) is mined, its status is \emph{unconfirmed} (not a part of the ledger) and is represented as $\beta^{k}_{i}$ (for the $i^{th}$ unconfirmed partial block referencing $B_{k}$). These partial blocks are confirmed when a full block references them. That is, for some $r > k$, block $B_{k}$ references $\beta^{k}_{i}$, then the partial block is considered confirmed (part of the blockchain) and represented as $\gamma_{i}^{k}$. A partial block gets rewarded when it's confirmed. The currently mined block can only confirm partial blocks that were mined within the $z$ most recent full blocks mined. Partial blocks get rewarded $R_{partial}$ and full blocks get rewarded $R_{full}$. The parameters $R_{partial}, R_{full}, T_{full}$ (target difficulty of the full block), $T_{partial}$ (target difficulty of the partial block) and $z$ are set by the system. 

The current mined (full-block) is $B_{k}$ then it's history is represented by 
\[
    \begin{aligned}
        \mathcal{H} := & \big(\{B_{k-1},B_{k-2},\ldots,B_{1}\}, \{\overline{\gamma}^{k-2},\overline{\gamma}^{k-3},\ldots,\overline{\gamma}^{2}\}, \\
        & \{\overline{\beta}^{k-1},\overline{\beta}^{k-2},\ldots,\overline{\beta}^{k-z}\}\big)
    \end{aligned}
\]

\ssubsection{Game Formulation} Consider the set of transactions confirmed in block $B_{k}$ be contained in set $\mathcal{Z}_{k}$. Also, consider the indicator function $\mathbb{I}$ such that $\mathbb{I}(s_{1},\gamma) = 1$ if some block $\gamma$ is mined by player $p_{1}$ (having pub-key $s_{1}$) and $0$ otherwise. Therefore, $\Gamma_{fruitchain}$ for block $B_{k}$ and history $\mathcal{H}$ is represented for player $p_{1}$ as
\[
    \Gamma_{fruitchain}^{s_{1}}(B_{k},\mathcal{H}_{k}) = R_{full}\mathbb{I}(s_{1},B_{k}) + R_{partial}\sum_{\gamma \in \mathcal{Z}_{k}} \mathbb{I}(s_{1},\gamma) 
\]
For ease of notation, we represent $\Gamma_{fruitchain,k}^{s_{1}}(B_{k},\mathcal{H}_{k}) = \Gamma_{fruitchain,k}^{s_{1}}$. Similarly, $\mathbb{I}(pk_{i},\gamma)$ is the indicator function representing if a block $\gamma$ was mined by pool $i$ with pub-key $pk_{i}$. The game using memoryless BRM similar to fruitchains is represented as $\mathcal{G}(\Gamma_{fruitchain,k})$. From Equation~\ref{eqn:reward}, the reward for player $p_{1}$ in $\mathcal{G}(\Gamma_{fruitchain,k})$ in round $k$ under Assumption~\ref{assumption:approximate-mining-power} is represented by 

\begin{equation*}
    R_{k}^{fruitchain} = \begin{cases}
        \Gamma_{fruitchain,k}^{s_{1}}  & \text{with prob. }\frac{g_{0}M_{1}}{M_{2}} \\ 
        \Gamma_{fruitchain,k}^{pk_{i}}\psi_{i}(s_{1},\mathcal{M}(\overline{g},\overline{f})) & \text{with prob. }f_{i}\\
        0 & \text{otherwise}
    \end{cases}
\end{equation*}

\ssubsection{Analysis} The advantage of using retentive BRMs is they reduce risk for any new-coming player in case of solo mining. Consider two BRMs $\Gamma_{memoryless},\Gamma_{fruitchain}$ as previously described and $\mathcal{G}(\Gamma_{memoryless})$ and $\mathcal{G}(\Gamma_{fruitchain})$ be the games defined over the respective BRMs. Through Theorem~\ref{thm:retentive-better} we show that $\Gamma_{fruitchain}$ provides lower risk to player $p_{1}$ than $\Gamma_{memoryless}$.

\begin{Theorem}\label{thm:retentive-better}
        Retentive BRM $\Gamma_{fruitchain}$ provides lower risk than $\Gamma_{memoryless}$ towards solo mining strategy to player $p_{1}$ characterized by $\theta_{1} = (M_{1},\rho)$ for any $\rho > 1$.
\end{Theorem}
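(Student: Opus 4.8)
The plan is to specialize the general reward equation~\eqref{eqn:reward} to the solo-mining strategy $\overline{g}$ with $g_0=1$ (and all other $g_i=0$) and to compare the risks $\left(\mathbb{E}[R_k^{\rho}]\right)^{1/\rho}$ of the two BRMs directly. First I would record the memoryless case exactly: under solo mining the reward is a two-point random variable, equal to $R_{block}$ with probability $\frac{M_1}{M_2}$ (using Assumption~\ref{assumption:approximate-mining-power} to replace $M_1+M_2$ by $M_2$) and $0$ otherwise. Hence $\mathbb{E}[(R_k^{memoryless})^{\rho}] = R_{block}^{\rho}\cdot\frac{M_1}{M_2}$, so the risk is $R_{block}\left(\frac{M_1}{M_2}\right)^{1/\rho}$. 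This is a single rare, large payout, and it is exactly the $\rho$-norm I want to beat.

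Next I would write out the Fruitchains reward for the same solo-mining strategy from $\Gamma_{fruitchain}^{s_1}$. In a round, $p_1$ collects $R_{full}$ if it mines the full block $B_k$ (again with probability $\frac{M_1}{M_2}$, the same hash-power fraction as in the memoryless case) plus $R_{partial}$ for each of its partial blocks confirmed by $B_k$. Because partial blocks carry the easier target $T_{partial}>T_{full}$, they are mined at a strictly higher frequency, so the reward arrives as many small payouts of size $R_{partial}<R_{block}$ rather than one payout of size $R_{block}$. I would make the comparison fair by imposing the natural equal-issuance constraint, namely that the two BRMs have the same expected per-round reward, $R_{full}\frac{M_1}{M_2} + R_{partial}\,\mathbb{E}[N] = R_{block}\frac{M_1}{M_2}$, where $N$ is the number of $p_1$'s confirmed partial blocks; without such a calibration the statement is not meaningful, since scaling all rewards down would trivially reduce risk.

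The core of the argument is the convexity (superadditivity) of $x\mapsto x^{\rho}$ for $\rho>1$: for nonnegative reals $\sum_j a_j^{\rho}\le\left(\sum_j a_j\right)^{\rho}$, strictly once the mass is split across at least two positive terms. Intuitively, redistributing a fixed expected reward from a rare large payout into more frequent smaller payouts is a mean-preserving contraction, and since $x^{\rho}$ is convex the raw $\rho$-th moment strictly decreases (equivalently, the memoryless reward dominates the Fruitchains reward in the convex order). As a sanity check on a single matched channel, modelling $R_{partial}$ with probability $q'$ against $R_{block}$ with probability $q$ and equating means ($R_{partial}q' = R_{block}q$) makes the ratio of $\rho$-norms equal to $\left(q/q'\right)^{1-1/\rho}$, which is strictly below $1$ precisely because $q'>q$ and $1-1/\rho>0$ for $\rho>1$. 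Combining the partial-block contributions with the full-block term and applying the superadditivity bound then delivers $\left(\mathbb{E}[(R_k^{fruitchain})^{\rho}]\right)^{1/\rho} < \left(\mathbb{E}[(R_k^{memoryless})^{\rho}]\right)^{1/\rho}$.

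The hard part will be the bookkeeping behind the second paragraph: the count $N$ of confirmed partial blocks is itself random (roughly binomial/Poisson in $\frac{M_1}{M_2}$ and the difficulty ratio $T_{partial}/T_{full}$), so I must expand $\mathbb{E}\big[(R_{full}\mathbb{I}(s_1,B_k)+R_{partial}N)^{\rho}\big]$ carefully rather than pretend the payouts are independent single events. I expect the cleanest route is to phrase the whole comparison through the convex order — a mean-preserving spread always increases $\mathbb{E}[\phi]$ for convex $\phi$ — which avoids computing every cross-moment; verifying that, under the equal-issuance calibration, the Fruitchains reward is genuinely a mean-preserving contraction of the memoryless reward is the step demanding the most care.
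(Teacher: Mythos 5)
Your setup matches the paper's exactly: the equal-issuance calibration you impose is Equation~\ref{eqn:app-cross-mechanism-expected-reward} (your constraint $R_{full}\frac{M_1}{M_2}+R_{partial}\,\mathbb{E}[N]=R_{block}\frac{M_1}{M_2}$ reduces to $R_{block}=R_{full}+\frac{T_{partial}}{T_{full}}R_{partial}$ since $\mathbb{E}[N]=\frac{T_{partial}}{T_{full}}\cdot\frac{M_1}{M_2}$), and the intended engine --- convexity of $x\mapsto x^{\rho}$ turning one rare large payout into many frequent small ones --- is the same idea driving the paper's bound. But your execution has a genuine gap at the core step. First, the superadditivity inequality $\sum_j a_j^{\rho}\le\bigl(\sum_j a_j\bigr)^{\rho}$ points the wrong way for your purpose: applied to the within-round sum $R_{full}\mathbb{I}(s_1,B_k)+R_{partial}N$ it yields $\mathbb{E}[(X+Y)^{\rho}]\ge\mathbb{E}[X^{\rho}]+\mathbb{E}[Y^{\rho}]$ for nonnegative $X,Y$, i.e.\ a \emph{lower} bound on the Fruitchains $\rho$-th moment, whereas you need an upper bound. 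Second, the convex-order fallback does not verify as you state it: the standard fact that the two-point variable on $\{0,R_{block}\}$ is the convex-order maximum among equal-mean distributions requires the competitor's support to lie in $[0,R_{block}]$, but your round-level Fruitchains reward $R_{full}+R_{partial}N$ exceeds $R_{block}$ on the tail events where $N$ is large. So ``Fruitchains is a mean-preserving contraction of memoryless'' is exactly the claim that cannot be asserted without additional work --- you flag this honestly as the delicate step, but as written it remains unproven, and it is the whole theorem.

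The paper's proof (Appendix~\ref{app:thm-retentive-better}) dissolves the difficulty by changing the unit of analysis from the round to the single hash query. Per query the reward is a simple multi-point random variable --- $R_{full}$ with probability $\gamma_1=T_{full}/2^{\lambda}$, $R_{partial}$ with probability $\gamma_2=T_{partial}/2^{\lambda}$, and $0$ otherwise --- so the random count $N$ and all cross-moments never appear. The two risks become $\gamma_1^{1/\rho}R_{block}$ and $\bigl(\gamma_1 R_{full}^{\rho}+\gamma_2 R_{partial}^{\rho}\bigr)^{1/\rho}$, and the ratio is pushed below $1$ algebraically by substituting $R_{full}=R_{block}-\frac{T_{partial}}{T_{full}}R_{partial}$ and observing $\gamma_1 T_{partial}^{\rho}>\gamma_2 T_{full}^{\rho}$ for $\rho>1$ (equivalently $(T_{partial}/T_{full})^{\rho-1}>1$). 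Note that at the query level the means match exactly, $\gamma_1 R_{full}+\gamma_2 R_{partial}=\gamma_1 R_{block}$, and the support lies in $[0,R_{block}]$, so your chord/convex-order argument would go through verbatim there. The repair to your proposal is therefore not a new idea but a relocation: run your second and third paragraphs per query instead of per round, and both of the obstructions above disappear.
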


\begin{proof}
    To compare two different BRMs, we first impose the constraint that the reward distributed in one round is the same for both mechanisms. We then show that for a solo player $p_{1}$ that makes $M_{1}$ queries during which the rest of the system makes $M_{2}$ queries, we show that Utility is greater when BRM is $\Gamma_{fruitchain}$ than for $\Gamma_{memoryless}$. The complete proof is provided in Appendix~\ref{app:thm-retentive-better}.
\end{proof}

Although Fruitchains reduce the risk of solo mining, it also reduces the risk for mining pools. Therefore, centralization is an issue in Fruitchains, as demonstrated by Theorem~\ref{thm:fruitchains-centralized}. Note that centralization in fruitchains has been studied under a different model by~\cite{Stouka2023FruitchainsAnalysis} and the findings about fruitchains using our model align with their results. 

\begin{Theorem}\label{thm:fruitchains-centralized}
    Consider a player $p_{1}$ characterized by $\theta_{1} = (M_{1},\rho)$ joining the PoW blockchain system with BRM $\Gamma_{fruitchain}$. The player is always incentivized to join mining pools instead of solo mining. 
\end{Theorem}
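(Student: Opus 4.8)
The plan is to exhibit a pool-joining strategy that strictly dominates pure solo mining, so that the equilibrium $\overline{g}^{*}$ can never have $g_0^{*}=1$. Let $i^{*}=\arg\max_{i} f_i$ denote the largest pool, and compare the solo strategy $\overline{g}^{\,solo}$ (with $g_0=1$) against the pure-pool strategy $\overline{g}^{\,pool}$ (with $g_{i^{*}}=1$). Both place all of $p_1$'s hash rate on a single location, so each has exactly one non-zero coordinate and incurs no pool-hopping; consequently, by property \textbf{P1}, the switching term $c\cdot D(\overline{g})$ takes the same minimal value for both. Since the expected reward is preserved across strategies (shown next), the comparison collapses to the risk term $\left(\mathbb{E}[R_k^{\rho}]\right)^{1/\rho}$, and it suffices to prove this is strictly smaller under $\overline{g}^{\,pool}$.

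First I would establish $\mathbb{E}[R_k^{\,solo}]=\mathbb{E}[R_k^{\,pool}]$. Under $\overline{g}^{\,solo}$ the player earns $R_{full}$ whenever it mines a full block and $R_{partial}$ for each confirmed partial block, so its expected per-round reward is proportional to $M_1/M_2$ times the total reward emitted. Under $\overline{g}^{\,pool}$, pool $i^{*}$ mines at the combined rate $M_1+f_{i^{*}}M_2$ and, by the fairness of its RSS (Claim~\ref{claim:fair-rss}), returns to $p_1$ the share $\psi_{i^{*}}(s_1,\cdot)\approx M_1/(f_{i^{*}}M_2)$; multiplying the pool's mining frequency $f_{i^{*}}$ by this share again yields a reward proportional to $M_1/M_2$. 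This is exactly the reward-conservation step already used in Theorem~\ref{thm:retentive-better}.

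The core of the argument is the risk comparison. Relative to solo mining, joining $i^{*}$ replaces each rare full-magnitude payout (received with probability $\propto M_1/M_2$) by a frequent, down-scaled payout of size $\psi_{i^{*}}$ times the reward (received with the much larger probability $f_{i^{*}}$), while holding the mean fixed; this is a mean-preserving contraction, and since $x\mapsto x^{\rho}$ is strictly convex for $\rho>1$ it strictly decreases $\mathbb{E}[R_k^{\rho}]$. Concretely, mirroring the memoryless computation, the solo risk term behaves like $R_{full}(M_1/M_2)^{1/\rho}$ whereas the pool risk term behaves like $R_{full}(M_1/M_2)\,f_{i^{*}}^{-(\rho-1)/\rho}$, so their ratio is $\left(\frac{M_1/M_2}{f_{i^{*}}}\right)^{(\rho-1)/\rho}<1$ because $M_1/M_2\ll f_{i^{*}}$ by Assumption~\ref{assumption:approximate-mining-power}. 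The partial-block rewards of $\Gamma_{fruitchain}$ enter additively and are each smoothed by the pool's larger rate, so they only reinforce the inequality.

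The main obstacle is making this risk inequality rigorous in the presence of the partial-block terms: unlike the memoryless case, $R_k^{fruitchain}$ is a sum of a full-block term and up to $z$ partial-block terms whose confirmations are correlated across rounds, so the $\rho$-th moment does not factor cleanly. I would control this either by bounding the cross terms and applying convexity termwise, or by computing $\mathbb{E}[(R_k^{fruitchain})^{\rho}]$ directly under Assumption~\ref{assumption:approximate-mining-power}, where $M_1\ll M_2$ linearizes the participation probabilities. Once $\left(\mathbb{E}[(R_k^{pool})^{\rho}]\right)^{1/\rho}<\left(\mathbb{E}[(R_k^{solo})^{\rho}]\right)^{1/\rho}$ is established, combining it with the equal expected reward and equal switching cost gives $U_1(\overline{g}^{\,pool},\overline{f})>U_1(\overline{g}^{\,solo},\overline{f})$ for every $\rho>1$ and every $c\geq 0$, so solo mining is never a best response and $p_1$ is always incentivized to join a mining pool.
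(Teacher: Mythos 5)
Your proposal is correct and follows essentially the same route as the paper's proof: compare solo mining against joining the largest pool, observe the switching costs coincide (one non-zero coordinate each) and expected rewards coincide by fairness, then reduce to the risk ratio $\left(\frac{M_1/M_2}{f_{1}}\right)^{(\rho-1)/\rho} < 1$ under Assumption~\ref{assumption:approximate-mining-power}. The only difference is that you explicitly flag the partial-block cross terms as an obstacle, whereas the paper sidesteps this by treating the round-$k$ fruitchain payout as a single lump reward $\Gamma_{fruitchain,k}$ — so your version is, if anything, slightly more careful on a point the paper glosses over.
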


\begin{proof}
    The proof follows by showing the existence of a strategy $\overline{g}$ where $g_{i} \neq 0$ for some $i \in \{1,2,\ldots,p\}$ (joining mining pools). We show this strategy has higher utility than solo mining, therefore a rational player $p_{1}$ is incentivized to join mining pools than solo mining. The complete proof is provided in Appendix~\ref{app:thm-fruitchains-centralized}.
\end{proof}

While fruitchains fail to achieve decentralization in the sense that solo mining is an equilibrium strategy for any new player joining the system, $\Gamma_{fruitchain}$ gives us an important insight. It reduces the gap between the optimal strategy $\overline{g}^{*}$ and solo mining strategy $(1,0,0\ldots,0)$. In the following section, we present a novel retentive BRM \proname. Under this BRM, solo mining becomes DSE after some number of rounds for any player $p_{1}$. 

\section{$\proname:$ Optimal BRM}\label{sec:protocol}

In this section, we (i) propose \proname, a risk-optimal BRM, and (ii) show that under the game $\mathcal{G}(\Gamma_{\proname})$, solo mining is the equilibrium strategy, which means using $\Gamma_{\proname}$ leads to a decentralized PoW blockchain.

\paragraph{\proname} The proposed BRM distributes rewards obtained from a block equally among miners who are part of the PoW blockchain. The reward from block $B_{k}$ is distributed among miners of blocks $B_{1},B_{2}\ldots,B_{k}$ equally. This allows a player to obtain a reward proportional to its mining power in each round. Since mining is a random process, for any player joining at round $r_{0}$, we consider the expected number of blocks mined by the player to be proportional to its mining power after $r_{0} + T$ with very high probability. 

The implementation of this protocol can be through a special transaction that inputs two (1) block $B_{k}$, (2) $B_{q}$ (for $q < k$) that was mined by public-key $pk_{q}$ and (3) signature from $pk_{q}$ to transfer $R_{block}\frac{B_{k}}{k}$ to the output address. This implementation is feasible and can be carried out through a \emph{fork} in most of the existing PoW-based blockchains. 

Next, we model this BRM under $\mathcal{G}$ and show that after $T$ rounds from joining, $p_{1}$ performing solo mining is an equilibrium strategy.

\paragraph{Game Formulation} Consider a player $p_{1}$ joining the PoW blockchain with the BRM $\Gamma_{\proname}$. In round $k$ where the block mined is $B_{k}$ and the history $\mathcal{H}_{k} = \{B_{k-1},B_{k-2},\ldots,B_{1}\}$. The indicator function for block $B$ is $\mathbb{I}(s, B) = 1$ if $B$ is mined by a player with pub-key $s$ and $0$ otherwise. The reward $R_{k}$ is given by 
\[
    \begin{aligned}
        \Gamma_{\proname}^{s_{1}}(B_{k},\mathcal{H}_{k}) = \sum_{i=1}^{k} \frac{R_{block}}{k}\mathbb{I}(s_{1},B_{i})
    \end{aligned}
\]
For ease of notation we represent $\Gamma^{s_{1}}_{\proname}(B_{k},\mathcal{H}_{k}) = \Gamma_{\proname,k}^{s_{1}}$. The game is represented as $\mathcal{G}(\Gamma_{\proname})$ where the reward obtained by player $p_{1}$ characterized by $(M_{1},\rho)$ in round $k$ is given by $R_{k}$ under Assumption~\ref{assumption:approximate-mining-power}. The public key for player $p_{1}$ is $s_{1}$ and for pool $i$ is $pk_{i}$. 

\[
    \begin{aligned}
        R_{k}^{\proname} = \begin{cases}
            \Gamma_{\proname,k}^{s_{1}} & \text{with prob. } \frac{g_{0}M_{1}}{M_{2}}\\
            \Gamma_{\proname,k}^{pk_{i}}\psi_{i}(s_{1},\mathcal{M}(\overline{g},\overline{f})) & \text{with prob. } f_{i}\\
            0 & \text{otherwise}
        \end{cases}
    \end{aligned}
\]

\paragraph{Analysis} We now show that \proname\ leads to a decentralized PoW blockchain system. Towards this, we show using Lemma~\ref{lemma:proto-decentralized} that for any player $p_{1}$ with type $\theta_{1} = (M_{1},\rho)$, solo mining $\overline{g}^{sm} = (1,0,0,\ldots,0)$ is equilibrium strategy. The lemma follows by showing that after $T$ rounds, the expected utility (expectation over the randomness of the mining protocol) is higher for solo mining than any other mining strategy with a very high probability. The proof is provided in Appendix~\ref{app:lemma-proto-decentralized}.

\begin{Lemma}\label{lemma:proto-decentralized}
    In the game $\mathcal{G}(\Gamma_{\proname})$, following solo mining strategy $\overline{g}^{sm}$ is a equilibrium strategy after $T$  rounds from joining for player $p_{1}$ characterized by $\theta_{1} = (M_{1},\rho)$.
\end{Lemma}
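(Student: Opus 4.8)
The plan is to fix an arbitrary strategy $\overline{f}$ for $p_2$ together with an arbitrary alternative strategy $\overline{g}'\in S_\alpha$ for $p_1$, and to show that for every round $k\geq r_0+T$ the per-round utility $u_1$ of solo mining $\overline{g}^{sm}$ weakly dominates that of $\overline{g}'$; summing the discounted per-round utilities from Equation~\ref{eqn:utility-across-rounds} then yields $U_1(\overline{g}^{sm},\overline{f})\geq U_1(\overline{g}',\overline{f})$, which is exactly the equilibrium condition of Definition~\ref{def:eq}. Since $u_1$ splits into the three terms $a\mathbb{E}[R_k]$, $-b(\mathbb{E}[R_k^\rho])^{1/\rho}$ and $-cD(\overline{g})$, I would compare the contributions separately. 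The first step is to observe that the expected-reward term is the \emph{same} for every strategy: under $\Gamma_{\proname}$ the per-round pool of $R_{block}$ is split over all historical blocks, so $p_1$'s expected income equals $R_{block}$ times its expected share of mined blocks. Using Assumption~\ref{assumption:approximate-mining-power} and the fairness of the pools' RSS (Claim~\ref{claim:fair-rss}, giving $\mathbb{E}[\psi_i]=\tfrac{g_iM_1}{g_iM_1+f_iM_2}$), the expected share from solo mining and from each pool $i$ both reduce to $\tfrac{g_iM_1}{M_2}$, so summing over the simplex gives the common value $\mathbb{E}[R_k]=R_{block}\tfrac{M_1}{M_2}$, independent of $\overline{g}$. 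Hence the $a\mathbb{E}[R_k]$ term cancels in the comparison and only risk and switching cost remain.

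The core of the argument is a concentration statement for solo mining. Let $N_k$ be the number of blocks among $B_1,\dots,B_k$ mined by $p_1$ under $\overline{g}^{sm}$; each block is credited to $p_1$ independently with probability $\tfrac{M_1}{M_1+M_2}\approx\tfrac{M_1}{M_2}$, so $N_k\sim\mathrm{Bin}(k,\tfrac{M_1}{M_1+M_2})$ and $R_k=\tfrac{R_{block}}{k}N_k$. By a Chernoff/Hoeffding bound, $\Pr[\,|N_k/k-\tfrac{M_1}{M_1+M_2}|>\varepsilon\,]\leq 2e^{-2\varepsilon^2 k}$, so choosing $T$ large makes $R_k$ lie within $\varepsilon R_{block}$ of its mean $R_{block}\tfrac{M_1}{M_2}$ with probability at least $1-2e^{-2\varepsilon^2 k}$ for all $k\geq r_0+T$. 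Consequently $R_k$ is essentially deterministic, and the risk of solo mining $(\mathbb{E}[R_k^\rho])^{1/\rho}$ is driven down to its mean $\mathbb{E}[R_k]$ as $T\to\infty$. This is precisely the ``reward proportional to mining power'' property promised in the construction of \proname.

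It remains to combine this with two inequalities valid for every competing strategy $\overline{g}'$. By Jensen's inequality applied to the convex map $x\mapsto x^\rho$ (recall $\rho>1$), the risk of any strategy satisfies $(\mathbb{E}[(R_k')^\rho])^{1/\rho}\geq\mathbb{E}[R_k']=R_{block}\tfrac{M_1}{M_2}$, which is exactly the value solo mining attains in the limit; hence no strategy has strictly smaller risk than $\overline{g}^{sm}$ once $k\geq r_0+T$. For the switching cost, $\overline{g}^{sm}$ has the single non-zero coordinate $g_0=1$ and involves no pool participation, so property \textbf{P1} (together with the marginal bound $D_{min}$ of Definition~\ref{def:marginal-switching-cost} for any strategy that spreads mass onto a pool) gives $D(\overline{g}^{sm})\leq D(\overline{g}')$. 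Putting the pieces together, $u_1(\overline{g}^{sm},\overline{f},k)-u_1(\overline{g}',\overline{f},k)=b\big[(\mathbb{E}[(R_k')^\rho])^{1/\rho}-(\mathbb{E}[R_k^\rho])^{1/\rho}\big]+c\big[D(\overline{g}')-D(\overline{g}^{sm})\big]\geq 0$ for every $k\geq r_0+T$, and the discounted sum preserves the inequality, establishing the equilibrium claim.

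The main obstacle I expect is the risk comparison rather than the switching cost: I must ensure that solo mining's reward genuinely concentrates to the Jensen lower bound while a full single-pool strategy does \emph{not} reach it, since otherwise solo mining and a one-pool strategy would tie on both expected reward and count-based switching cost. This forces a quantitative treatment of how the per-round RSS share fails to be smoothed over history under $\Gamma_{\proname}$ --- unlike $p_1$'s own cumulative block count, the pool-share randomness carries an irreducible per-round variance --- together with control of the transient rounds $k<r_0+T$ and of the $\varepsilon$--$T$ trade-off, so that the ``with high probability'' statement can be made uniform across the relevant rounds.
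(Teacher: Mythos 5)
Your high-level decomposition --- identical expected reward $R_{block}M_1/M_2$ for every strategy, a per-round comparison of risk and switching cost, then the discounted sum --- matches the paper's proof, and your switching-cost step $D(\overline{g}^{sm})\leq D(\overline{g}')$ is exactly what the paper uses. The genuine divergence is in the risk step, and there your sketch has a real gap that you flag honestly but do not close: by the very same Chernoff argument you apply to $N_k$, a full single-pool strategy also concentrates, since pool $j$ mines $\approx f_j k$ of the first $k$ blocks, each worth $R_{block}/k$, so under a smooth RSS (e.g.\ PPS) the player's income from that pool also approaches the Jensen floor $\mathbb{E}[R_k]$; and since a one-pool strategy has the same support size as $\overline{g}^{sm}$, property \textbf{P1} gives it the same switching cost (your appeal to $D_{min}$ does not help here --- $D_{min}$ only lower-bounds differences between strategies whose $D$-values already differ). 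Your argument therefore delivers at best an asymptotic tie, not dominance: at finite $k$ both risks sit strictly above the floor, and which is closer depends on the RSS fluctuations you have not controlled. The ``irreducible per-round variance of the pool share'' that you hope breaks the tie is precisely the unproven step.

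The paper's own proof avoids this tie by a different --- and cruder --- modeling of the risk term: it retains the per-round randomness of which entity mines $B_k$, computing the pool strategy's risk as $\frac{R_{block}M_1}{M_2}\bigl(\sum_{j=1}^{p} f_j g_j^{\rho}\bigr)^{1/\rho}$ (order of the mean), while solo income is treated as paid only in the rare rounds where $p_1$ mines the current block, yielding risk $\frac{R_{block}M_1}{M_2}\bigl(\frac{M_1}{M_2}\bigr)^{1/\rho}$, negligible under Assumption~\ref{assumption:approximate-mining-power}; even so, the paper concludes only $U_{sm}\gtrapprox U_{mp}$, i.e.\ it too settles for approximate dominance rather than the exact inequality of Definition~\ref{def:eq}. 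Your Jensen observation in fact exposes a tension in that computation: a nonnegative random variable cannot have $\rho$-norm below its mean, yet the paper's solo-mining risk is strictly below the expected reward it computes for the same strategy --- the mean is evaluated on the smoothed income while the $\rho$-th moment is evaluated on a one-shot payment. So your route is sound up to the tie-breaking step; closing the gap requires either adopting the paper's per-round-randomness model of pool income or a quantitative lower bound on the per-round variance of the RSS share, which neither your sketch nor, rigorously, the paper supplies.
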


Since following $\overline{g}^{sm}$ is DSE, then the protocol is strongly $\rho$-Decentralized for any $\rho$, in Theorem~\ref{thm:main-theorem}.

\begin{Theorem}\label{thm:main-theorem}
    A PoW blockchain using $\Gamma_{\proname}$ block reward mechanism is $\rho$-Decentralized for any $\rho > 1$.
\end{Theorem}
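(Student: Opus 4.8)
The plan is to derive Theorem~\ref{thm:main-theorem} as an essentially immediate consequence of Lemma~\ref{lemma:proto-decentralized}, since that lemma already carries all of the game-theoretic weight. The only remaining work is to substitute the equilibrium strategy into the inequality of Definition~\ref{def:weak-decentralization} and check that it holds for every $\overline{f}$.

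First I would invoke Lemma~\ref{lemma:proto-decentralized}: for a player $p_{1}$ of arbitrary type $\theta_{1} = (M_{1},\rho)$ with $\rho > 1$, and for every strategy $\overline{f}$ that $p_{2}$ may disclose, the equilibrium (indeed dominant) strategy after $T$ rounds is solo mining $\overline{g}^{sm} = (1,0,\ldots,0)$. It is important that this holds uniformly in $\overline{f}$, which is exactly the ``for any arbitrary $\overline{f}$'' quantifier demanded by Definition~\ref{def:weak-decentralization}. Setting $\overline{g}^{*} = \overline{g}^{sm}$ therefore gives $g_{0}^{*} = 1$ and $g_{j}^{*} = 0$ for every pool $j \in \{1,\ldots,p\}$.

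Second, I would substitute $\overline{g}^{*} = \overline{g}^{sm}$ into the decentralization inequality. Because every $g_{j}^{*}$ with $j \geq 1$ vanishes, the post-join relative size of pool $j$ collapses to
\[
\frac{g_{j}^{*} M_{1} + f_{j} M_{2}}{M_{1} + M_{2}} = \frac{f_{j} M_{2}}{M_{1} + M_{2}}.
\]
Since $M_{1} > 0$, the factor $M_{2}/(M_{1}+M_{2})$ is strictly less than $1$, so $\frac{f_{j} M_{2}}{M_{1}+M_{2}} < f_{j} \le \max_{i} f_{i}$ for each $j$. Taking the maximum over $j$ preserves this bound, yielding
\[
\max_{i} f_{i} \;>\; \max_{j} \frac{g_{j}^{*} M_{1} + f_{j} M_{2}}{M_{1} + M_{2}},
\]
which is strictly stronger than the $\ge$ required by Definition~\ref{def:weak-decentralization}. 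As this holds for every admissible $\overline{f}$ and every $\rho > 1$, the mechanism $(\Gamma_{\proname},\psi)$ is $\rho$-Decentralized for all $\rho > 1$; moreover, by Remark~\ref{rem:decreasing-decentralization} the largest pool's relative share strictly shrinks whenever a new miner joins.

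I do not expect a genuine obstacle here, as the theorem is a corollary. The only point requiring care is the ``after $T$ rounds, with high probability'' qualifier inherited from Lemma~\ref{lemma:proto-decentralized}: I would state the decentralization guarantee as holding in the post-$T$ regime, and emphasize that the mechanism forces each $g_{j}^{*}$ to zero by diverting all of $p_{1}$'s power to solo mining, which is precisely what makes the conclusion independent of the particular reward-sharing schemes in $A$ and of $p_{2}$'s strategy $\overline{f}$.
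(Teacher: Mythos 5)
Your proposal is correct and follows essentially the same route as the paper: invoke Lemma~\ref{lemma:proto-decentralized} to fix $\overline{g}^{*}=\overline{g}^{sm}$ and then verify the inequality of Definition~\ref{def:weak-decentralization}, which the paper states directly and you merely make explicit via $g_{j}^{*}=0$ giving $\frac{f_{j}M_{2}}{M_{1}+M_{2}}<f_{j}$. Your additional care about the post-$T$, high-probability qualifier inherited from the lemma is a sound refinement that the paper's one-line proof leaves implicit.
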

\begin{proof}
    From Lemma~\ref{lemma:proto-decentralized}, we have $\Gamma_{\proname}$ is such that solo mining is a dominant strategy. After player $p_{1}$ joins the system:
    \[
        \max_{i\in[1,p]} f_{i} > \max_{j \in \{1,2,\ldots,p\}} \frac{g_{j}M_{1} + f_{j}M_{2}}{M_{1} + M_{2}}
    \]
    Therefore, the protocol is $\rho$-Decentralized. 
\end{proof}
\section{Conclusion \& Future Work}
\label{sec:conclusion}
\paragraph{Conclusion} To conclude, we understood through the rational decision of a new miner (player $p_{1}$) joining the PoW blockchain system (player $p_{2}$) about the decentralization guarantees of the BRM offered by the PoW blockchain under a more general utility model. We show that while bitcoin-like memoryless BRMs do not ensure decentralized BRMs, retentive BRMs offer better performance (lower risk for solo miners). We then propose \proname, a retentive BRM with solo mining as an equilibrium strategy. Therefore, the resulting PoW system will also be $\rho$-Decentralized for any $\rho$.

\paragraph{Future Work} We analyzed centralization/decentralization guarantees for the given protocol are such that for all $\rho > 1$ and  $c \geq \underline{c}$ the PoW blockchain with $\Gamma_{memoryless}$ is centralized, and $c = 0$ it is $\rho$-Decentralized. However, there is an open gap $0 < c \leq \underline{c}$ where we don't know the guarantees of the protocol. Closing this gap is left for future work. In addition, analysis of other BRMs such as~\cite{2,21-11}, which uses different mining functions or smart contract functionality of another (external to the analyzed PoW system) blockchain, is left for future work. 

\newpage

\bibliographystyle{unsrtnat}

\newpage

\appendix

\section{Proof for Claim~\ref{claim:fair-rss}}\label{appendix:claim-fair-rss}

\begin{proof}
    Consider an arbitrary mining pool $i$ following RSS $\psi_{i}$. Given BRM is $\Gamma$ and $\mathcal{G}(\Gamma)$ is the corresponding game. We consider two cases (i) $\mathbb{E}[\psi_{i}(s_{1},\mathcal{M}_{k}^{i})] < \frac{g_{i}M_{1}}{g_{i}M_{1} + f_{i}M_{2}}$ and (ii) $\mathbb{E}[\psi_{i}(s_{1},\mathcal{M}_{k}^{i})] > \frac{g_{i}M_{1}}{g_{i}M_{1} + f_{i}M_{2}}$. 

    \noindent \underline{\textbf{Case 1}} $\left(\mathbb{E}[\psi_{i}(s_{1},\mathcal{M}_{k}^{i})] < \frac{g_{i}M_{1}}{g_{i}M_{1} + f_{i}M_{2}}\right)$: In this case, consider the expected reward for player $p_{1}$ from investing $g_{i} = x$ in pool $i$.
    \[
        \begin{aligned}
            \mathbb{E}[R_{k}] = & \frac{\Gamma(B_{k},\mathcal{H}_{k})g_{0}M_{1}}{M_{1}+M_{2}}\\ & + \sum_{j=1}^{p} \Gamma(B_{k},\mathcal{H}_{k})\psi_{j}(s_{1},\mathcal{M}_{k}^{j})\frac{f_{i}M_{2} + g_{i}M_{1}}{M_{1} + M_{2}}\\
            = & \frac{R_{block}g_{0}M_{1}}{M_{1}+M_{2}} + \Gamma(B_{k},\mathcal{H}_{k})\psi_{i}(s_{1},\mathcal{M}_{k}^{i}) \\ & + \sum_{j=1,j \neq i}^{p} \Gamma(B_{k},\mathcal{H}_{k})\psi_{j}(s_{1},\mathcal{M}_{k}^{j})\frac{f_{i}M_{2} + g_{i}M_{1}}{M_{1} + M_{2}} \\
            < & \frac{R_{block}(g_{0} + g_{i})M_{1}}{M_{1}+M_{2}} \\ & + \sum_{j=1,j \neq i}^{p} \Gamma(B_{k},\mathcal{H}_{k})\psi_{j}(s_{1},\mathcal{M}_{k}^{j})\frac{f_{i}M_{2} + g_{i}M_{1}}{M_{1} + M_{2}}
        \end{aligned}
    \]
    In the utility equation for player $p_{1}$ in the game $\mathcal{G}(\Gamma)$, coefficient $a$ for expected reward $\mathbb{E}[R_{k}]$ is such that the coefficients for risk and switching cost $b$ and $c$ are negligible wrt. $a$. Further, since the increase in $\mathbb{E}[R_{k}]$ is non-negligible by switching from strategy $\overline{g}^{2} = \left(g_{0},g_{1},\ldots,g_{i},\ldots,g_{p}\right)$ to $\overline{g}^{1} = \left(g_{0}+g_{i},g_{1},\ldots,0,\ldots,g_{p}\right)$, therefore player $p_{1}$ always prefers $\overline{g}^{1}$ over $\overline{g}^{2}$. Hence pool $i$ does not get any computing resource from any $p_{1}$ of arbitrary type $\theta_{1} \in \mathbb{R}_{> 0}\times\mathbb{N}$.

    \noindent \underline{\textbf{Case 2}} $\left(\mathbb{E}[\psi_{i}(s_{1},\mathcal{M}_{k}^{i})] > \frac{g_{i}M_{1}}{g_{i}M_{1} + f_{i}M_{2}}\right)$: This case is infeasible for the mining pool. This is because, if for any incoming miner $p_{1}$ we have $\psi_{i} > \frac{g_{i}M_{1}}{g_{i}M_{1} + f_{i}M_{2}}$, then summing over all the miners that are part of the system, we get $\psi_{i} > 1$. This is because for any miner part of the mining pool $i$, we established from Case 1 that $\psi_{i} \geq \frac{g_{i}M_{1}}{g_{i}M_{1} + f_{i}M_{2}}$. However, $\psi_{1} > 1$ means the mining pool is distributing more rewards than they obtain, which bankrupts the mining pool and therefore is not possible. 

    Having established both cases, we come to the conclusion that for any mining pool $i$, the RSS should be Fair according to the Definition~\ref{def:fairness} for the pool to survive and miners join the pool. 
\end{proof}

\section{Proof for Lemma~\ref{thm:2-decentralization}}\label{app:thm-2-decentralization}

\begin{proof}
    We consider a game $\mathcal{G}(\Gamma_{memoryless})$ and analyse the DSE for player $p_{1}$.  The utility for $p_{1}$ in any round $r$ given by Equation~\ref{eqn:utility-single-round}. Since we study $2$-decentralization, $p_{1}$ is of type $\theta_{1} = (M_{1},\rho)$ for any $M_{1} \in \mathbb{R}_{> 0}$. The utility for any round $r$ is,
    \[
        \begin{aligned}
            u_{i}\left(\overline{g},\overline{f};(\theta_{1},\theta_{2})\right) \approx &  a \sum_{i=1}^{p} \frac{g_{i}M_{1}}{M_{2}}R_{block} + \frac{g_{0}M_{1}}{M_{2}}R_{block} \\
            & + (R_{block})^{\rho}\sum_{i=1}^{p} \frac{g_{i}^{\rho}}{f_{i}^{\rho-1}} + D(\overline{g})
        \end{aligned}
    \]
    We have due to Assumption~\ref{assumption:approximate-mining-power}, the probability of a miner $p_{1}$ solo-mining $\frac{g_{0}M_{1}}{M_{2}}$ is also very small compared to her mining as part of a pool. We first show that $\overline{g} = \overline{f}$ is the minima for risk -- $\mathbb{E}[R_{block}^{2}]$. Consider the risk when $\overline{g} = \overline{f}$.
    \[
        \left(\mathbb{E}[R_{block}^{\rho}]\right)^{1/\rho} = R_{block}\left(\sum_{i=1}^{p} \frac{g_{i}^{\rho}}{f_{i}}\right)^{1/\rho} = R_{block}\left(\sum_{i=1}^{p} f_{i}\right)
    \]
    Now, consider for any $\overline{g}^{'} \neq \overline{f}$. We divide $g_{i} \in \overline{g}^{'}$ into three parts $X = \{ i | g_{i} > f_{i}, g_{i} \in \overline{g}^{'}\}, Y = \{ i | g_{i} < f_{i}, g_{i} \in \overline{g}^{'}\}, $ and $Z = \{ i | g_{i} = f_{i}, g_{i} \in \overline{g}^{'}\}$. The risk for $\overline{g}^{'}$ is given as 
    \[
        \begin{aligned}
            \left(\mathbb{E}[R_{block}^{\rho}]\right)^{1/\rho} & = R_{block}\left(\sum_{l\in X} \frac{g_{l}^{\rho}}{f_{l}^{\rho-1}} + \sum_{m\in Y} \frac{g_{m}^{\rho}}{f_{m}^{\rho-1}} + \sum_{n\in Z} \frac{g_{n}^{\rho}}{f_{n}^{\rho-1}}\right)^{1/\rho}\\
            & = R_{block}\left(\sum_{l\in X} \frac{(f_{l} + \tau_{l})^{\rho}}{f_{l}^{\rho-1}} + \sum_{m\in Y} \frac{(f_{m} - \tau_{m})^{\rho}}{f_{m}^{\rho-1}} + \sum_{n\in Z} \frac{f_{n}^{\rho}}{f_{n}^{\rho-1}}\right)^{1/\rho}\\
            & \geq R_{block}\left(\sum_{i=1}^{p} f_{i} + \rho(\sum_{l \in X} \tau_{l} - \sum_{m \in Y} \tau_{m})\right)^{1/\rho}\\
            & = R_{block}^{\rho}\left(\sum_{i=1}^{p} f_{i}\right)^{1/\rho}
        \end{aligned}
    \]
    We use the lower bound $x^{v} - v\cdot x^{v-1}y \leq (x - y)^{v}$ and $\sum_{l \in X} \tau_{l} = \sum_{m \in Y} \tau_{m}$ for obtaining the above inequality. We have obtained that if risk is $r_1 = (\mathbb{E}[R_{block}^{\rho}])^{1/\rho}$ when $\overline{g} = \overline{f}$ and $r_2 = (\mathbb{E}[R_{block}^{\rho}])^{1/\rho}$  when $\overline{g} \neq \overline{f}$, then $r_1 \leq r_2$. Hence, strategy $\overline{g} = \overline{f}$ is risk-otpimal for any $\rho$.   
    
    Since, (i) expected reward is the same for all strategies $\overline{g}$, (ii) risk is minimized for strategy $\overline{g} = \overline{f}$ and (iii) $c = 0$,  the strategy $\overline{g} = \overline{f}$ gives the optimal utility for $p_{1}$.     
\end{proof}

\section{Proof for Theorem~\ref{thm:bakwaas}}\label{app:thm-bakwaas}
\begin{proof}
    To prove this theorem, we consider the utility from two strategies (1) $\overline{g}^{1}$ where $g^{1}_{i} = 1$ for $i = arg\max_{j\in [1,p]} f_{j}$ and $0$ otherwise, (2) any other $\overline{g}^{2} \in S_{\alpha}$ in $\alpha$-Marginal Strategy space. Wlog. we take $f_{1} \geq f_{2} \geq \ldots \geq f_{p}$.  Consider the utility of player $p_{1}$ in Game $\mathcal{G}(\Gamma_{memoryless})$ following strategy $\overline{g}^{1}$ after Assumption~\ref{assumption:approximate-mining-power} as
    \[
        U_{1} = u_{1}\left(\overline{g}^{1},\overline{f};(\theta_{1},\theta_{2})\right) \approx \frac{aR_{block}M_{1}}{M_{2}} - \left(\frac{b\cdot R_{block}M_{1}}{M_{2}f_{1}^{(\rho-1)/\rho}}\right) - c\cdot D(\overline{g}^{1})
    \]
    Similarly, consider the utility for some strategy $\overline{g}^{2} = (g_{1},g_{2},\ldots,g_{p})$ after Assumption~\ref{assumption:approximate-mining-power} is, 
    \[
        \begin{aligned}
            U_{2} = u_{1}\left(\overline{g}^{2},\overline{f};(\theta_{1},\theta_{2})\right) & \approx \frac{aR_{block}M_{1}}{M_{2}} - c\cdot D(\overline{g}^{2})\\
            & - \frac{b\cdot R_{block}M_{1}}{M_{2}}\left(\sum_{j=1}^{p}\frac{g^{\rho}_{j}}{f^{\rho-1}_{j}}\right)^{1/\rho}
        \end{aligned}
    \]
    We now take the value $U_{1} - U_{2}$ which gives us
    \[
    \begin{aligned}
        U_{1} - U_{2} = & c\cdot(D(\overline{g}^{2}) - D(\overline{g}^{1})) \\ & - \frac{b\cdot R_{block}M_{1}}{M_{2}}\left(\frac{1}{f_{1}^{(\rho - 1)/\rho}}
         - (\sum_{j=1}^{p} \frac{g_{j}^{\rho}}{f_{j}^{\rho-1}})^{1/\rho}\right)\\ 
         \geq & c \cdot(D(\overline{g}^{2}) - D(\overline{g}^{1})) \\ & - 
         \frac{b\cdot R_{block}M_{1}}{M_{2}}\left(p^{(\rho - 1)/\rho}
         - 1\right)\\ 
         \geq & c \cdot D_{min} - \frac{b \cdot R_{block}M_{1}p^{(\rho-1)/\rho}}{M_{2}} \\
         \geq & c \cdot D_{min} - \frac{b\cdot R_{block}M_{1}p}{M_{2}} \geq 0
    \end{aligned}
    \]
    The first inequality comes from the lower bound on $\left(\sum_{i}\frac{g_{i}^{\rho}}{f_{i}^{\rho-1}}\right) = 1$ from proof of Theorem~\ref{thm:2-decentralization} in Appendix~\ref{app:thm-2-decentralization} and $\frac{1}{f_{1}} \leq p$ by simple algebra (note $f_{1} \geq f_{2} \geq \ldots \geq f_{p}$). Second inequality comes since $\overline{g}^{1},\overline{g}^{2} \in S_{\alpha}$ are two distinct points on $\alpha$-Marginal Strategy Space, therefore $|D(\overline{g}^{2}) - D(\overline{g}^{1})| \geq D_{min}$. However, by property \textbf{P1} of the function $D$, we have $D(\overline{g}^{2}) \geq D(\overline{g}^{1})$. Therefore, $D(\overline{g}^{2}) - D(\overline{g}^{1}) \geq D_{min}$. The last inequality comes from $c\cdot M_{2}\cdot D_{min} \geq b\cdot R_{block}\cdot M_{1}p$.

    Therefore, $U_{1} \geq U_{2}$. This means strategy $\overline{g}^{1}$ is Equilibrium from Definition~\ref{def:eq} and therefore the system becomes centralized as all the new miners are incentivized to join the largest mining pool.
\end{proof}

\section{Proof for Theorem~\ref{thm:impossibility}}\label{app:thm-impossibility}
\begin{proof}
    To prove this theorem, we consider payoff in any arbitrary round $k$ where the block mined is $B_{k}$ and history is $\mathcal{H}_{k}$. The memoryless BRM is $\Gamma_{memoryless}$ and game is $\mathcal{G}(\Gamma_{memoryless})$. Wlog. consider $f_{1} \geq f_{2} \geq \ldots f_{p}$. To show that any memoryless BRM will lead to the formation of mining pools, we compare utility for player $p_{1}$ of any type $\theta_{1} \in \mathbb{R}_{>0}\times\mathbb{N}$ for two strategies:
    \begin{itemize}
        \item $\overline{g}^{lm}$ -- when $p_{1}$ dedicates mining power largest mining pool. Here $g^{lm}_{i} = 1 \; i = arg\max_{j \in [1,k]} f_{j}$. Utility for this strategy is $U_{lm} = u_{1}\left(\overline{g}^{lm},\overline{f};(\theta_{1},\theta_{2})\right)$.
        \item  $\overline{g}^{sm}$ -- when miner dedicates their entire mining power to solo mining. In this case, $g^{sm}_{0} = 1$. Utility for this strategy is $U_{sm} = u_{1}\left(\overline{g}^{sm},\overline{f};(\theta_{1},\theta_{2})\right)$.
    \end{itemize}
    After Assumptions~\ref{assumption:approximate-mining-power} utility $U_{lm}$ and $U_{sm}$ are given as
    \[
        U_{lm} = \frac{a\cdot R_{block}\cdot M_{1}}{M_{2}} - \frac{b\cdot R_{block}\cdot M_{1}}{M_{2}}\left(f_{1}\right)^{1/\rho} - c\cdot D(\overline{g}^{lm})
    \]
    \[
        U_{sm} = \frac{a\cdot R_{block}\cdot M_{1}}{M_{2}} - \frac{b\cdot R_{block}\cdot M^{1/\rho}_{1}}{M^{1/\rho}_{2}} - c\cdot D(\overline{g}^{sm})
    \]
    We make the simple observation that (i) $D(\overline{g}^{sm}) \neq D(\overline{g}^{lm})$ and (ii) from Assumption$\frac{M_{1}}{M_{2}} << 1 \implies \left(\frac{M_{1}}{M_{2}}\right)^{1/\rho} > \frac{p\cdot M_{1}}{M_{2}} > \frac{f_{1}\cdot M_{1}}{M_{2}}$. Therefore, $U_{lm} > U_{sm}$ for $\rho > 1$. With this, we conclude that $\overline{g}^{sm}$ (solo-mining strategy) is always dominated by $\overline{g} = \overline{f}$ which means mining pools will exist in Memoryless BRMs. 
\end{proof}

\section{Proof for Theorem~\ref{thm:retentive-better}}\label{app:thm-retentive-better}

\begin{proof}
    To prove this theorem, we consider the ratio of the risk ($\rho^{th}$ moment) to the expected payoff (raised to $\rho^{th}$ power) for any player $p_{1}$ that is mining in the PoW blockchain in two cases -- $\Gamma_{memoryless}$ and $\Gamma_{fruitchain}$. The Target difficulty for full blocks in Fruitchains as well as Bitcoin is $T_{full}$ and for partial blocks in Fruitchains is $T_{partial}$. Since it is easier to mine a partial block, we have $T_{full} < T_{partial}$. The number of partial blocks mined in expectation, for every full block in fruitchains is, therefore, $\frac{T_{partial}}{T_{full}}$\footnote{The result follows from simple probability theory, but details can be found in~\cite{Rafael2017Fruitchains}}. In addition, reward given on mining a full block in fruitchains should be equal to the reward from memoryless BRMs. Therefore, 
    \begin{equation}\label{eqn:app-cross-mechanism-expected-reward}
        R_{block} = R_{full} + \frac{T_{partial}}{T_{full}}\cdot R_{partial}
    \end{equation}
    We now calculate $(\mathbb{E}[R_{k}^{\rho}])^{1/\rho}$ for both Memoryless BRMs which we call $Risk_{memoryless}$ and fruitchains which we call $Risk_{fruitchain}$. 
    
    \smallskip \noindent \underline{\textbf{Case 1 ($\Gamma_{memoryless}$):}} The game being played is $\mathcal{G}(\Gamma_{memoryless})$  and player $p_{1}$ has type $\theta_{1} = (M_{1},\rho)$ and player $p_{2}$ has type $\theta_{2} = (M_{2},A)$. This means, by the time system makes $M_{2}$ queries for mining, the player $p_{1}$ makes $M_{1}$ queries. The probability of a query made by $p_{1}$ in mining a block is $\gamma_{1} = \frac{T_{full}}{2^{\lambda}}$. The reward obtained is $R_{block}$. Therefore, 
    \[
        Risk_{memoryless} = (E[R_{k}^{\rho}])^{1/\rho} = \gamma_{1}^{1/\rho}\cdot R_{block}
    \]
    
    \smallskip \noindent \underline{\textbf{Case 2 ($\Gamma_{fruitchain}$):}} The game being played is $\mathcal{G}(\Gamma_{fruitchain})$ and the $p_{1}$ has type $\theta_{1} = (M_{1},\rho)$ and player $p_{2}$ has type $\theta_{2} = (M_{2},A)$. Probability of mining in one query a full block is $\gamma_{1} = \frac{T_{full}}{2^{\lambda}}$ and for partial block is $\gamma_{2} = \frac{T_{partial}}{2^{\lambda}}$. The reward obtained is $R_{full}$ and $R_{partial}$ respectively. Therefore, 
    \[
        Risk_{fruitchain} = (E[R_{k}^{\rho}])^{1/\rho} = \left(\gamma_{1}R^{\rho}_{full} + \gamma_{2}R^{\rho}_{partial}\right)^{1/\rho}
    \]
    Now we calculate the ratio of risks below
    \[
        \begin{aligned}
            \frac{Risk_{fruitchain}}{Risk_{memoryless}} & = \left(\frac{\gamma_{1}R^{\rho}_{full} + \gamma_{2}R^{\rho}_{partial}}{R_{block}^{\rho}\gamma_{1}} \right)^{1/\rho}\\
            \leq & \left(1 - \frac{R_{partial}(\gamma_{1}T^{\rho}_{partial} - \gamma_{2}T_{full}^{\rho})}{R_{block}^{\rho}T_{full}^{\rho}\gamma_{1}}\right)\\
            < & 1 
        \end{aligned}
    \]
    The first inequality is obtained from $R_{full} = R_{block} - \frac{T_{partial}}{T_{full}}R_{partial}$ and second inequality is obtained because for any $\rho > 1$ we have $\gamma_{1}T^{\rho}_{partial} - \gamma_{2}T_{full}^{\rho} > 0$. Therefore, we show that $Risk_{fruitchains} < Risk_{memoryless}$ for solo miners. Since strategy $\overline{g}$ is the same in both cases, the switching cost is the same for both Memoryless BRMs and Fruitchains. In addition, the expected reward remains the same due to Equation~\ref{eqn:app-cross-mechanism-expected-reward} in the Appendix. Hence, the utility for solo miners is higher in fruitchains than memoryless BRMs for the same expected block reward. 
\end{proof}

\section{Proof for Theorem~\ref{thm:fruitchains-centralized}}\label{app:thm-fruitchains-centralized}

\begin{proof}
    Consider a player $p_{1}$ characterized by $\theta_{1} = (M_{1},\rho)$ and $p_{2}$ characterized by $(M_{2},A)$. The game being played is $\mathcal{G}(\Gamma_{fruitchain})$. Player $p_{1}$ plays after observing $\overline{f}$ (strategy chosen by $p_{2}$). Wlog. consider $f_{1} \geq f_{2} \geq \ldots f_{p}$. Consider a round $k$ where block $B_{k}$ is mined. We need to show that joining mining pools dominates the strategy to perform solo mining. We represent the strategies: 
    \begin{itemize}
        \item $\overline{g}^{sm} = (1,0,0,\ldots,0)$ as strategy for \emph{solo-mining} with $g^{sm}_{0} = 1$. The utility corresponding to solo-mining is $U_{sm}$.
        \item $\overline{g}^{lm} = (0,1,0,0,\ldots,0)$ as the strategy where the miner joins the \textbf{l}argest \textbf{m}ining pool. The corresponding utility is $U_{lm}$.
    \end{itemize}
    Our goal is to show that  $U_{lm} > U_{sm}$ for all instances of $\mathcal{G}(\Gamma_{fruitchain})$ with different characterizations of $p_{1}$ and $p_{2}$. We first make the observation that $D(\overline{g}^{lm}) = D(\overline{g}^{sm})$ because in either case, there is no two pools to switch between. Now, $U_{sm}$ is given by
    \[
    \begin{aligned}
        U_{sm} & = &  a\cdot\mathbb{E}[R_{k}] & - b\cdot(\mathbb{E}[R^{\rho}_{k}])^{1/\rho} - c\cdot D(\overline{g}^{sm})\\
        & = &  \frac{a\cdot\Gamma_{fruitchain,k}M_{1}}{M_{2}} & - b\cdot\Gamma_{fruitchain,k}\cdot\left(\frac{M_{1}}{M_{2}}\right)^{1/\rho} - c\cdot D(\overline{g}^{sm})
    \end{aligned}
    \]
    Similarly, we get $U_{lm}$ as 
    \[
    \begin{aligned}
        U_{lm} & = &  a\cdot\mathbb{E}[R_{k}] & - b\cdot(\mathbb{E}[R^{\rho}_{k}])^{1/\rho}  - c\cdot D(\overline{g}^{lm})\\
        & = &  \frac{a\cdot\Gamma_{fruitchain,k}M_{1}}{M_{2}} & - b\cdot\Gamma_{fruitchain,k}\cdot\frac{M_{1}}{M_{2}}\frac{1}{f_{1}^{(\rho-1)/\rho}}  - c\cdot D(\overline{g}^{lm}) \\
        & \leq &  \frac{a\cdot\Gamma_{fruitchain,k}M_{1}}{M_{2}} & - b\cdot\Gamma_{fruitchain,k}\cdot\frac{M_{1}}{M_{2}} - c\cdot D(\overline{g}^{lm})
    \end{aligned}
    \]
    Therefore, taking $U_{lm} - U_{sm}$ and since $D(\overline{g}^{sm}) = D(\overline{g}^{lm})$, we get 
    \[
        U_{lm} - U_{sm} = b\Gamma_{fruitchain,k}\left(\frac({M_{1}}{M_{2}})^{1/\rho} - \frac{M_{1}}{M_{2}}\right) > 0
    \]
    Inequality follows since for $M_{1} < M_{2}$ (from Assumption~\ref{assumption:approximate-mining-power}) we have $\frac{M_{1}}{M_{2}} < \left(\frac{M_{1}}{M_{2}}\right)^{1/\rho}$ for any $\rho > 1$. Thus, joining the largest pool \emph{strictly dominates} solo mining for any instance of $\mathcal{G}(\Gamma_{fruitchains})$.
\end{proof}

\section{Proof for Lemma~\ref{lemma:proto-decentralized}}\label{app:lemma-proto-decentralized}
\begin{proof}
    Consider player $p_{1}$ joining the protocol in round $r_{0}$. After $T$ rounds, consider any round $k > r_{0} + T$. The block proposed in this round is $B_{k}$ and $\mathcal{H}_{k}$ be the history. We consider two strategies: 
    \begin{itemize}
        \item $\overline{g}^{sm}$ -- \emph{solo mining} where $g^{sm}_{0} = 1$. The utility corresponding to this strategy is $U_{sm}$.
        \item $\overline{g}^{mp}$ -- $p_{1}$ invests in \emph{mining pools} according to $g^{mp} \in S_{\alpha}$. The utility corresponding to this strategy is $U_{mp}$.
    \end{itemize}
    The reward from the block $B_{k}$ for $p_{1}$ playing the solo mining strategy is given by the random variable $R_{k}^{sm}$. The probability that a block was mined by player $p_{1}$ under Assumption~\ref{assumption:approximate-mining-power} is $\frac{M_{1}}{M_{2}}$.
    \[
        \mathbb{E}[R_{k}^{sm}] = \sum_{i=1}^{k} \frac{M_{1}}{M_{2}}\cdot \frac{R_{block}}{k} = \frac{R_{block}\cdot M_{1}}{M_{2}} 
    \]
    The risk is given as $(\mathbb{E}[(R_{k}^{sm})^{\rho}])^{1/\rho}$. We calculate this risk as 
    \[
        (\mathbb{E}[(R_{k}^{sm})^{\rho}])^{1/\rho} = \frac{R_{block}\cdot M_{1}}{M_{2}}\left(\frac{M_{1}}{M_{2}}\right)^{1/\rho}
    \]
    Similarly, consider random variable for strategy $\overline{g}^{mp}$ is $R_{k}^{mp}$. 
    \[
        \mathbb{E}[R_{k}^{mp}] = \sum_{i=1}^{k} \sum_{j=1}^{p} f_{j}\frac{R_{block}}{k}\frac{g_{j}M_{1}}{f_{j}M_{2}} = \sum_{j=1}^{p} \frac{R_{block}g_{j}M_{1}}{M_{2}} = \frac{R_{block}\cdot M_{1}}{M_{2}}
    \]
    The risk is given as $(\mathbb{E}[(R_{k}^{mp})^{\rho}])^{1/\rho}$. We calculate the risk as 
    \[
    \begin{aligned}
        (\mathbb{E}[(R_{k}^{mp})^{\rho}])^{1/\rho} & = \left(\sum_{i=1}^{k} \sum_{j=1}^{p} f_{j}\left(\frac{R_{block}}{k}\frac{g_{j}M_{1}}{f_{j}M_{2}}\right)^{\rho}\right)^{1/\rho}\\
        & = \frac{R_{block}M_{1}}{M_{2}} \left(\sum_{j=1}^{p} f_{j}\left(g_{j}\right)^{\rho}\right)^{1/\rho}
    \end{aligned}
    \]
    Difference in utility $U_{sm} - U_{mp}$ is given by 
    \[
    \begin{aligned}
        U_{sm} - U_{mp} & = & a(\mathbb{E}[R_{k}^{sm}] - \mathbb{E}[R_{k}^{mp}]) \\ & & - b((\mathbb{E}[(R_{k}^{sm})^{\rho}])^{1/\rho} - (\mathbb{E}[(R_{k}^{mp})^{\rho}])^{1/\rho}) \\ & & - c\cdot (D(\overline{g}^{sm}) - D(\overline{g}^{mp})) 
    \end{aligned}
    \]
    We know $D(\overline{g}^{sm}) - D(\overline{g}^{mp}) \leq 0$ because solo mining has no switching cost. In addition, the expected payoff is the same in both cases. Therefore,  
    \[
    \begin{aligned}
        U_{sm} - U_{mp} & > \frac{R_{block}\cdot M_{1}\cdot b}{M_{2}}\left( \left(\sum_{j=1}^{p} f_{j}\left(g_{j}\right)^{\rho}\right)^{1/\rho} - \left(\frac{M_{1}}{M_{2}}\right)^{1/\rho} \right) \gtrapprox 0
    \end{aligned}
    \]
    The last inequality is obtained because from Assumption~\ref{assumption:approximate-mining-power} $\frac{M_{1}}{M_{2}}$ is very small. Therefore, we obtain $U_{sm} \gtrapprox U_{mp}$ which means solo mining is more profitable (or in worse cases, negligibly less profitable) than mining pools when BRM is $\Gamma_{\proname}$.
\end{proof}

\end{document}